\newcommand\figref{Figure~\ref}
\newtheorem{theorem}{Theorem}
\newtheorem{lemma}[theorem]{Lemma}
\newtheorem{proposition}{Proposition}
\newtheorem{definition}{Definition}
\newcommand{\Rmnum}[1]{\expandafter\@slowromancap\romannumeral #1@}
\begin{document}
\title{Resilient Distributed Diffusion for Multi-task Estimation}

\author{
\IEEEauthorblockN{
   Jiani Li and$^{}$            
   Xenofon Koutsoukos$^{}$       
}

\IEEEauthorblockA{
  $^{}$Institute for Software Integrated Systems\\
  Vanderbilt University\\
  $^{}$$\{\text{jiani.li, xenofon.koutsoukos}\}$@vanderbilt.edu
}
}


%


\maketitle

\begin{abstract}
Distributed diffusion is a powerful algorithm for multi-task state estimation which enables networked agents to interact with neighbors to process input data and diffuse information across the network. 
Compared to a centralized approach, diffusion offers multiple advantages that include robustness to node and link failures. 
In this paper, we consider distributed diffusion for multi-task estimation where networked agents must estimate distinct but correlated states of interest by processing streaming data.
By exploiting the adaptive weights 
used for diffusing information, we develop attack models that drive normal agents to converge to states selected by the attacker.
The attack models can be used for both stationary and non-stationary state estimation.
In addition, we develop a resilient distributed diffusion algorithm under the assumption that the number of compromised nodes in the neighborhood of each normal node is bounded by $F$ and we show that resilience may be obtained at the cost of performance degradation. Finally, we evaluate the proposed attack models and resilient distributed diffusion algorithm using stationary and non-stationary multi-target localization. 
\end{abstract}

\section {Introduction}

Diffusion Least-Mean Squares (DLMS) is a powerful algorithm for distributed state estimation \cite{journals/spm/SayedTCZT13}.
It enables networked agents to interact with neighbors to process
streaming data and diffuse information across the network to continually perform the estimation tasks. 
Compared to a centralized approach, diffusion offers multiple advantages that include robustness to drifts 
in the statistical properties of the data, scalability, relying on local data and 
fast response among others. 
Applications of distributed diffusion include spectrum sensing in cognitive networks \cite{7086338}, 
target localization \cite{targetLocalization}, distributed clustering \cite{6232902},  
and biologically inspired designs for mobile networks \cite{mobileAdaptiveNetworks}. 

Diffusion strategies have been shown to be robust to node and link failures
as well as nodes or links with high noise levels \cite{5948418, 7472545}. 
Resilience of diffusion-based distributed algorithms in the presence of intruders has 
been studied in \cite{6232902,journals/spm/SayedTCZT13,yuanchen/AdversaryDetection}. 
The main idea is to use adaptive weights to counteract the attacks. 

In this paper, we consider distributed diffusion for multi-task estimation where networked agents must estimate distinct but correlated states of interest by processing streaming data.
We are interested in understanding if adaptive weights 
introduce vulnerabilities that can be exploited by an attacker.
The first problem we consider is to analyze if it is possible for an attacker 
to compromise a node so that it can make nodes in the neighborhood of the compromised 
node  converge to a state selected by the attacker. 
Then, we consider a network attack and we want to determine which minimum set of nodes  
to compromise in order to make the entire network to converge to states 
selected by the attacker.
Our final objective is to design a resilient distributed diffusion 
algorithm to protect against  attacks and continue the operation possibly 
with a degraded performance.  
We do not rely on detection methods to improve resilience because distributed detection with only local information may lead to false alarms \cite{6545301}.

Distributed optimization and estimation can be performed also using consensus algorithms. 
Resilience of consensus-based distributed algorithms in the presence of cyber attacks
has received considerable attention \cite{5779706, DBLP:journals/jsac/LeBlancZKS13,6736104, 7822915}. 
Typical approaches usually assume Byzantine faults and consider that the goal of the attacker
is to disrupt the convergence (stability) of the distributed algorithm. In contrast, this 
paper focuses on attacks that do not disrupt convergence but drive the normal agents to
converge to states selected by the attacker.

The contributions of the paper are:
\begin{enumerate}
\item By exploiting the adaptive weights used for diffusing information, we develop 
attack models that drive normal agents to converge to states selected by an attacker.
The attack models can be used for deceiving a specific node or the entire network and 
apply to both stationary and non-stationary state estimation.

\item We develop a resilient distributed diffusion algorithm under the assumption that 
the number of compromised nodes in the neighborhood of each normal node is bounded by 
$F$ and we show the resilience may be obtained at the cost of performance degradation. 
If the parameter $F$ selected by the normal agents is large, then the resilient distributed diffusion algorithm degenerates to noncooperative estimation.

\item We evaluate the proposed attack models and the resilient estimation algorithm 
using both stationary and non-stationary multi-target localization. 
The simulation results are consistent with our theoretical analysis and show that 
the approach provides resilience to attacks while incurring performance degradation
which depends on the assumption about the number of nodes that has been compromised. 
\end{enumerate}

The paper is organized as follows: Section \Rmnum{2} briefly introduces distributed
diffusion. Section \Rmnum{3} presents the attack and resilient distributed diffusion problems. 
The single node and network attack models are presented in Section \Rmnum{4} and \Rmnum{5} respectively. Section \Rmnum{6}, presents and analyzes the resilient distributed diffusion algorithm. Section \Rmnum{7} presents simulation results for evaluating the approach for
multi-target localization. Section \Rmnum{8} overviews related work and Section \Rmnum{9}
concludes the paper.

\section{Preliminaries}\label{preliminaries}
We use normal and boldface font to denote deterministic and random variables respectively.
The superscript $(\cdot)^{*}$ denotes complex conjugation for scalars and complex-conjugate 
transposition for matrices, $\mathbb{E}\{\cdot\}$ denotes expectation,  
and $\|\cdot\|$ denotes the euclidean norm of a vector.

Consider a connected network of $N$ (static) agents. 
At each iteration $i$, each agent $k$ has access to a scalar measurement 
$\bm{d}_{k}(i)$ and a regression vector $\bm{u}_{k,i}$ of size $M$ with 
zero-mean and uniform covariance matrix 
$R_{u, k} \triangleq \mathbb{E}\{\bm{u}_{k,i}^* \bm{u}_{k,i}\} > 0$, 
which are related via a linear model of the following form:
\begin{equation*}\label{eq:1}
\bm{d}_{k}(i) = \bm{u}_{k,i} w_k^0 + \bm{v}_{k}(i)
\end{equation*}
where $ \bm{v}_{k}(i)$ represents a zero-mean i.i.d. additive noise 
with variance $\sigma^2_{v,k}$ and
$w_k^{0}$ denotes the unknown $M\times 1$ state vector of agent $k$. 

The objective of each agent is to estimate $w_k^{0}$ from (streaming) data
$\{\bm{d}_{k}(i),\bm{u}_{k,i}\} $ $(k=1,2,...,N, i \geq 0)$. 
The model can be static or dynamic and we represent the objective state as $w_{k}^0$ or 
$\bm{w}^0_{k,i}$ respectively. 
÷≥For simplicity, we use $w_k^0$ to denote the objective state in both the static and dynamic case. 

The state $w_k^0$ can be computed as the the unique minimizer of the following cost function:
\begin{equation*}
J_k(w) \triangleq \mathbb{E} \{ \|\bm{d}_{k}(i)- \bm{u}_{k,i}w\|^2\}
\end{equation*}
An elegant adaptive solution for determining $w_k^0$ is the least-mean-squares (LMS) filter
\cite{journals/spm/SayedTCZT13}, 
where each agent $k$ computes successive estimators of $w_k^0$ without cooperation 
(noncooperative LMS) as follows:
\begin{equation*}
\bm{w}_{k,i} = \bm{w}_{k,i-1} + \mu_k \bm{u}_{k,i}^*[\bm{d}_{k}(i)-\bm{u}_{k,i}\bm{w}_{k,i-1}]
\end{equation*}

Compared to noncooperative LMS, diffusion strategies introduce an aggregation 
step that incorporates into the adaptation mechanism information collected from other
agents in the local neighborhood. 
One powerful diffusion scheme is adapt-then-combine (ATC) \cite{journals/spm/SayedTCZT13} 
which optimizes the solution in a distributed and adaptive way using the following update:
\begin{align*}
& \bm{\psi}_{k,i} =op& \text{(adaptation)}\\
& \bm{w}_{k,i} = \sum_{l \in \mathcal{N}_k} a_{lk}(i) \bm{\psi}_{l,i} & \text{(combination)}
\end{align*}
where $\mathcal{N}_{k}$ denotes the neighborhood set of agent $k$ 
including $k$ itself, 
$\mu_k>0$ is the step size (can be identical or distinct across agents), 
$a_{lk}(i)$ represents the weight assigned to agent $l$ from agent $k$  
that is used to scale the data it receives from $l$, 
and the weights satisfy the following constraints:
\begin{equation*}
 a_{lk}(i)\geq0,  \qquad \sum_{l \in \mathcal{N}_k} a_{lk}(i) = 1, \qquad a_{lk}(i)=0 \text{ if } l\not\in \mathcal{N}_k.
\end{equation*}

In the case when the agents estimate a common state 
$w^0$ (i.e., $w_k^0$ is the same for every $k$), 
several combination rules can be adopted such as Laplacian, Metropolis, averaging, 
and maximum-degree \cite{DBLP:journals/corr/abs-1205-4220}. 
In the case of multiple tasks, the agents are pursuing distinct but 
correlated objectives $w_k^0$. In this case, the combination rules mentioned above are not 
applicable because they simply combine the estimation of all neighbors without 
distinguishing if the neighbors are pursuing the same objective. An agent
estimating a different state will prevent its neighbors from estimating the 
state of interest.

Diffusion LMS (DLMS) has been extended for multi-task networks in \cite{6232902}
using the following adaptive weights:
\begin{equation}\label{eq: adaptive relative-variance combination rule}
a_{lk}(i)=
\begin{cases}
\frac {\gamma_{lk}^{-2}(i)} {\sum_{m \in \mathcal{N}_k}\gamma_{mk}^{-2}(i) }, & l \in \mathcal{N}_k\\
0, & \text{otherwise}
\end{cases}
\end{equation}
where $\gamma_{lk}^2(i) = (1-\nu_k)\gamma_{lk}^2(i-1)+\nu_k \| \bm{\psi}_{l,i}-\bm{w}_{k,i-1}\| ^2$ and  
$\nu_k$ is a positive step size known as the forgetting factor.
This update enables the agents to continuously 
learn which neighbors should cooperate with and which should not. 
During the estimation task, agents pursuing different objectives will assign 
to each other continuously smaller weights according to 
\eqref{eq: adaptive relative-variance combination rule}. 
Once the weights become negligible, the communication link between the agents 
does not contribute to the estimation task. As a result, as the estimation proceeds, 
only agents estimating the same state cooperate.

DLMS with adaptive weights (DLMSAW) outperforms the noncooperative LMS as measured by the 
steady-state mean-square-deviation performance (MSD) \cite{journals/spm/SayedTCZT13}. 
For sufficiently small step-sizes, the network performance of noncooperative LMS is 
defined as the average MSD level:  
\begin{equation*}
\text{MSD}_{\text{ncop}} \triangleq \lim_{i \rightarrow \infty} \frac{1}{N} \sum_{k=1}^N \mathbb{E} \| \tilde{\bm{w}}_{k,i}\|^2 \approx \frac{\mu M}{2} \cdot (\frac{1}{N} \sum_{k=1}^N \sigma_{v,k}^2)
\end{equation*}
where $\tilde{\bm{w}}_{k,i} \triangleq w_k^0 - \bm{w}_{k,i}$.
The network MSD performance of the diffusion network (as well as the MSD performance of a normal agent in the diffusion network) can be approximated by 
\begin{equation*}
\text{MSD}_{\text{k}} \approx \text{MSD}_{\text{diff}} \approx \frac{\mu M}{2} \cdot \frac{1}{N} \cdot (\frac{1}{N} \sum_{k=1}^N \sigma_{v,k}^2)
\end{equation*}
In \cite{journals/spm/SayedTCZT13}, it is shown that $\text{MSD}_{\text{diff}} = \frac{1}{N} \text{MSD}_{\text{ncop}}$, which demonstrates an $N$-fold improvement of MSD performance.

\section{Problem formulation}
Diffusion strategies have been shown to be robust to node and link failures
as well as nodes or links with high noise levels \cite{5948418, 7472545}. 
In this paper, we are interested in understanding if the adaptive weights 
provide resilience in the case a subset of networked nodes is compromised by 
cyber attacks.
The first problem being considered is to analyze if it is possible for an attacker 
to compromise a node so that it can make nodes in the neighborhood of this node 
converge to a state selected by the attacker. 
Then, we consider a network attack model to determine which minimum set of nodes  
to compromise in order to make the entire network to converge to states 
selected by the attacker.
Finally, we would like to design a resilient distributed 
algorithm to protect against attacks and continue the operation possibly 
with a degraded performance.

\subsection{Single Node Attack Model}
We consider false data injection attacks, and thus attacks only incur between neighbors 
exchanging messages. We assume that 
the attacker(s) know the topology of the network, the streaming data received by each agent, 
and the parameters used by the agents (e.g., $\mu_k$).
Compromised nodes are assumed to be Byzantine in the sense that they can send arbitrary 
messages to their neighbors, and also they can send different messages to different neighbors.
The objective of the attacker is to drive the normal nodes to converge to
a specific state. 
We assume a compromised node $a$ wants agent $k$ to converge to state 
\begin{equation*}
w_{k,i}^a=
\begin{cases}
w_k^a, &\text{for stationary estimation}\\
w_k^a + \theta_{k,i}^a, &\text{for non-stationary estimation}
\end{cases}
\end{equation*}
We define the objective function of the attacker as
\begin{equation}\label{eq: objective function}
\min_{\bm{w}_{k,i}}  \|\bm{w}_{k,i} - w^a_{k,i}\|, \qquad i \rightarrow \infty, \qquad \bm{w}^a_{k,i} \in D_{w,k}.
\end{equation}
where $D_{w,k}$ is the domain of state $\bm{w}_{k,i}$.


Another objective of the attacker can be to delay the convergence time of the normal agents.
One observation is that if the compromised node can make its neighbors to converge to a selected state, it can keep changing this state before normal neighbors converge. By doing so, normal neighbors being attacked will never converge to a fixed state. And thus, the attacker can achieve its goal to prolong the convergence time of normal neighbors. 
For that reason, we focus on the attack model based on objective \eqref{eq: objective function}. 

\subsection{Network Attack Model}

Determining which nodes to compromise 
is another problem. If the attacker has a specific target node that she wants to attack and make
it converge to a specific state, the attacker can compromise any neighbors of this node in order to achieve the objective. In the case the attacker wants to compromise the entire network and drive the multi-task estimation to specific states,  she needs to find a minimum set of nodes that will enable the attack in order to compromise the least possible nodes.

\subsection{Resilient Distributed Diffusion}\label{attack detection}
Distributed diffusion is said to be \emph{resilient} if 
\begin{equation}\label{eq: resilient distributed diffusion}
   \lim_{i \rightarrow \infty} \bm{w}_{k,i} = w_k^0 
\end{equation}
for all normal agents $k$ in the network which
ensures that all the noncompromised nodes will converge to the true state. 
We assume that in the neighborhood of a normal node, there could be at most $F$ compromised 
nodes~\cite{DBLP:journals/jsac/LeBlancZKS13}. 
Assuming bounds on the number of adversaries is typical for security and resilience of 
distributed algorithms.
We consider the problem of modifying DLMSAW to achieve resilience
while possibly incurring a performance degradation as measured by the MSD level.

\section{Single Node Attack Design}\label{the section of attack model}

In order to achieve the objective \eqref{eq: objective function}, a compromised node $a$ can send messages to
a neighbor node $ k $ so that the adaptive weights are assigned such that the state $\bm{w}_{k,i}$ (estimated by $k$)  is driven to $w_{k,i}^a$.
We assume the attack starts at $i_a \geq 0$ and the attack succeeds if $\exists i_c$, s.t. $\forall i > i_c$, $\| \bm{w}_{k,i} - w^a_{k,i}\| < \epsilon$, for some small value $\epsilon > 0$.

\begingroup
\makeatletter
\apptocmd{\thetheorem}{\unless\ifx\protect\@unexpandable@protect\protect\footnote{Proofs can be found in the Appendix.}\fi}{}{}
\makeatother

\begin{lemma}\label{mainthm}
If a compromised node $a$ wants to make a normal neighbor $k$ converge to a selected state 
$ w_{k,i}^a $, 
then it should follow a strategy to make the weight assigned by $k$ satisfy:

\textit{1. Stationary estimation:} 
$\exists  i_a'  \geq i_a$, s.t. $ (\forall i_a' \leq i \leq i_c, \forall l \in \mathcal{N}_k \cap l \neq a$, $a_{ak}(i) \gg a_{lk}(i) )$ $\wedge$  $ \neg (\forall i_a' \leq i \leq i_c, $ $ \forall \epsilon > 0, a_{ak}(i) > 1 - \epsilon) $ $\wedge$ $ (\forall i > i_c, \forall \epsilon >0, a_{ak}(i) > 1-\epsilon)$.

\textit{2. Non-stationary estimation:} $\exists  i_a'  \geq i_a$, s.t. $(\forall i \geq i_a', $ $ \forall l \in \mathcal{N}_k \cap l \neq a$, $a_{ak}(i) \gg a_{lk}(i) )$ $\wedge$ $ \neg (\forall i \geq i_a', $ $ \forall \epsilon > 0, a_{ak}(i) > 1 - \epsilon) $.




\end{lemma}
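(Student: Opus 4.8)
The plan is to reason directly from the combination step and isolate the attacker's contribution. Writing $\mathcal{N}_k^{-a} = \mathcal{N}_k \setminus \{a\}$ and, whenever $a_{ak}(i) < 1$, $\bar{\bm{\psi}}_{-a,i} \triangleq \frac{1}{1-a_{ak}(i)}\sum_{l \in \mathcal{N}_k^{-a}} a_{lk}(i)\bm{\psi}_{l,i}$ for the weight-normalized average of the legitimate messages, the combination rule becomes $\bm{w}_{k,i} = a_{ak}(i)\bm{\psi}_{a,i} + (1-a_{ak}(i))\bar{\bm{\psi}}_{-a,i}$. From the LMS adaptation in Section~\ref{preliminaries}, each legitimate neighbor $l$ drives $\bm{\psi}_{l,i}$ toward its own objective $w_l^0$ and $k$'s self-adaptation pulls $\bm{\psi}_{k,i}$ toward $w_k^0$, none of which coincides with the attacker target $w_k^a$ in the multi-task setting. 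The two quantities the attacker can shape through its control variable $\bm{\psi}_{a,i}$ are the combined state and the weight $a_{ak}(i)$, the latter only through $\gamma_{ak}^2(i) = (1-\nu_k)\gamma_{ak}^2(i-1) + \nu_k\|\bm{\psi}_{a,i}-\bm{w}_{k,i-1}\|^2$; the tension between these two is what drives the whole argument.

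I would first establish the dominance (condition 1) and the stationary lock-in (condition 3) from the success criterion $\|\bm{w}_{k,i}-w_{k,i}^a\| < \epsilon$. Using the decomposition, I would bound $\|\bm{w}_{k,i}-w_k^a\|$ from below by $(1-a_{ak}(i))\,\mathrm{dist}(w_k^a,\bar{\bm{\psi}}_{-a,i})$ up to the adaptation fluctuation; since $\bar{\bm{\psi}}_{-a,i}$ stays near the legitimate objectives, which are bounded away from $w_k^a$, driving $\|\bm{w}_{k,i}-w_k^a\|$ below $\epsilon$ forces $1-a_{ak}(i)$ to be of order $\epsilon$, i.e. $a_{ak}(i)\to 1$ and every $a_{lk}(i)\to 0$ for $i > i_c$ (condition 3), while the attacker term must already outweigh the aggregate legitimate pull at each step of the transient to make net progress toward $w_k^a$, which under the relative-variance weighting is exactly $a_{ak}(i)\gg a_{lk}(i)$ (condition 1).

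The delicate point, which I expect to be the \emph{main obstacle}, is condition 2: the weight cannot be essentially $1$ throughout the transient $[i_a',i_c]$. I would argue by contradiction. If $a_{ak}(i)\approx 1$ for every such $i$, then either $\gamma_{ak}^2(i)\approx 0$ or the competing $\gamma_{mk}^2(i)$ blow up. At the onset $i_a'$ the state $\bm{w}_{k,i_a'-1}$ still sits near the unattacked value $w_k^0$, so the legitimate $\gamma_{mk}^2(i_a')$ are not yet large; hence $a_{ak}(i_a')\approx 1$ can come only from $\gamma_{ak}^2(i_a')\approx 0$, which by the recursion forces $\|\bm{\psi}_{a,i_a'}-\bm{w}_{k,i_a'-1}\|\approx 0$. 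Then $\bm{w}_{k,i_a'}\approx\bm{\psi}_{a,i_a'}\approx\bm{w}_{k,i_a'-1}$, so the state does not move, and propagating this step by step pins $\bm{w}_{k,i}$ near $w_k^0\neq w_k^a$ over the whole interval, contradicting eventual success. Hence there must be a time in $[i_a',i_c]$ at which $a_{ak}(i)$ is bounded away from $1$, which is condition 2.

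Finally, the non-stationary case follows by applying the same ``the state must keep moving'' argument at every step rather than only on a finite transient: tracking the moving target $w_k^a+\theta_{k,i}^a$ requires $\bm{\psi}_{a,i}\neq\bm{w}_{k,i-1}$, hence $\gamma_{ak}^2(i)>0$ and $a_{ak}(i)<1$ for all $i\geq i_a'$, so the lock-in clause is absent while the dominance of condition 1 must persist for all $i\geq i_a'$.
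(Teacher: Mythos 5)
Your argument is correct and follows essentially the same route as the paper's proof: you isolate the attacker's term in the combination step, observe that after convergence the uncontrollable legitimate contributions force $a_{ak}(i)\to 1$ (stationary lock-in), note that $a_{ak}(i)\approx 1$ can only arise from $\|\bm{\psi}_{a,i}-\bm{w}_{k,i-1}\|\approx 0$, which freezes the state and so cannot hold throughout the transient, and require dominance $a_{ak}(i)\gg a_{lk}(i)$ for the attacker to steer the update. Your contradiction argument for condition 2, traced through the $\gamma_{ak}^2$ recursion, is a slightly more explicit rendering of the paper's remark that $\psi_{a,k,i}=\bm{w}_{k,i-1}$ leaves $\bm{w}_{k,i}$ unchanged, but the underlying idea is identical.
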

\endgroup

A compromised node can implement the attack by manipulating the value of  $\psi_{a,k,i}$
to satisfy \textit{Lemma 1}. \textit{Lemma 2} presents a sufficient condition for selecting
$\psi_{a,k,i}$ that satisfy the attack strategy in \textit{Lemma 1}.

\begin{lemma}
The strategy in \textit{Lemma 1} can be satisfied by selecting $\psi_{a,k,i}$ to satisfy the
following conditions:

\textit{1. Stationary estimation:}
$\forall l \in \mathcal{N}_k \cap l \neq a$, $(\forall i_a \leq i \leq i_c, $ $ \|\psi_{a,k,i} - \bm{w}_{k,i-1}\| \ll \|\psi_{l,k} - \bm{w}_{k,i-1}\|) $ $ \wedge $ $ \neg (\forall i_a \leq i \leq i_c, $ $ \|\psi_{a,k,i} - \bm{w}_{k,i-1}\| = 0)$ $ \wedge $ $ (\forall i > i_c, \|\psi_{a,k,i} - \bm{w}_{k,i-1}\| = 0)$.


\textit{2. Non-stationary estimation:} 
$\forall l \in \mathcal{N}_k \cap l \neq a$, $(\forall i \geq i_a, $ $ \|\psi_{a,k,i} - \bm{w}_{k,i-1}\| \ll \|\psi_{l,k} - \bm{w}_{k,i-1}\|) $ $ \wedge $ $ \neg (\forall i \geq i_a, $ $ \|\psi_{a,k,i} - \bm{w}_{k,i-1}\| = 0)$.

\end{lemma}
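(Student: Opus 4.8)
The plan is to push the hypotheses on the injected messages through the two maps that define the adaptive rule and read off the weight behaviour required by Lemma~1. The rule \eqref{eq: adaptive relative-variance combination rule} factors as a composition of two monotone maps: the forgetting recursion sends the instantaneous discrepancies $\|\bm{\psi}_{l,i}-\bm{w}_{k,i-1}\|$ to the accumulators $\gamma_{lk}^2(i)$, and the normalized inverse sends the $\gamma_{lk}^2(i)$ to the weights $a_{lk}(i)$. The first map is increasing and the second is decreasing, so their composition carries ``$a$ is much closer than every $l\neq a$'' into ``$a_{ak}(i)$ dominates every $a_{lk}(i)$'', which is the core of both parts of the lemma.

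First I would unroll the recursion into the explicit geometric sum
\begin{equation*}
\gamma_{lk}^2(i)=\nu_k\sum_{j\le i}(1-\nu_k)^{\,i-j}\,\|\bm{\psi}_{l,j}-\bm{w}_{k,j-1}\|^2 ,
\end{equation*}
so that $\gamma_{lk}^2(i)$ is an exponentially weighted average of past squared discrepancies. Splitting this sum at $i_a$ separates the history accumulated before the attack from the contribution of the post-attack discrepancies. Because $\|\psi_{a,k,j}-\bm{w}_{k,j-1}\|\ll\|\bm{\psi}_{l,j}-\bm{w}_{k,j-1}\|$ only for $j\ge i_a$, the pre-attack part of $\gamma_{ak}^2$ can still be large right after $i_a$; but the geometric weights $(1-\nu_k)^{\,i-j}$ decay that stale part, so there is a first index $i_a'\ge i_a$ beyond which the post-attack (small) discrepancies dominate the accumulator and $\gamma_{ak}^2(i)\ll\gamma_{lk}^2(i)$. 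This transient is exactly the reason Lemma~1 asserts the weight domination only from some $i_a'\ge i_a$ rather than from $i_a$ itself, and its length is controlled by $\nu_k$ and the size of the pre-attack memory.

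Next I would substitute into the weight formula. Writing $a_{ak}(i)=\bigl(1+\sum_{l\neq a}\gamma_{ak}^2(i)/\gamma_{lk}^2(i)\bigr)^{-1}$ and $a_{lk}(i)=\bigl(\gamma_{ak}^2(i)/\gamma_{lk}^2(i)\bigr)a_{ak}(i)$, the ratios $\gamma_{ak}^2(i)/\gamma_{lk}^2(i)$ are small on $[i_a',i_c]$, so $a_{ak}(i)$ is near $1$ while every $a_{lk}(i)$ is near $0$, giving $a_{ak}(i)\gg a_{lk}(i)$. For the non-degeneracy clause I use that the hypothesis $\neg(\forall j,\ \|\psi_{a,k,j}-\bm{w}_{k,j-1}\|=0)$ forces $\gamma_{ak}^2(i)>0$, hence every ratio is strictly positive and $a_{ak}(i)<1$, which is precisely $\neg(\forall i,\forall\epsilon,\ a_{ak}(i)>1-\epsilon)$. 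In the stationary case the remaining clause uses the post-$i_c$ hypothesis $\|\psi_{a,k,i}-\bm{w}_{k,i-1}\|=0$: the geometric weights drive $\gamma_{ak}^2(i)\to0$ while the $\gamma_{lk}^2(i)$ stay bounded away from $0$ (neighbours $l$ pursue different states), so all ratios vanish and $a_{ak}(i)\to1$, matching $\forall i>i_c,\forall\epsilon,\ a_{ak}(i)>1-\epsilon$. The non-stationary case is identical except that the small-but-nonzero discrepancy is maintained for all $i\ge i_a$, so the weight stays dominant yet strictly below $1$ for all time, reproducing the two conjuncts of part~2.

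The main obstacle is making the transient bound $i_a'$ explicit and quantifying the informal ``$\ll$'': one must show that a uniform gap $\|\psi_{a,k,j}-\bm{w}_{k,j-1}\|/\|\bm{\psi}_{l,j}-\bm{w}_{k,j-1}\|\le\delta$ over the window, together with the decay factor $(1-\nu_k)^{\,i-i_a}$ applied to the bounded pre-attack memory, yields a corresponding bound on $\gamma_{ak}^2(i)/\gamma_{lk}^2(i)$ and hence on $a_{lk}(i)/a_{ak}(i)$. The secondary delicate point is the limiting edge case linking ``discrepancy exactly zero after $i_c$'' to ``weight exactly $1$'': this holds only asymptotically through the decay of the accumulator, so the post-$i_c$ clause must be read as convergence of $a_{ak}(i)$ to $1$ rather than equality at finite $i$.
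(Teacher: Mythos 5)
Your argument is essentially the paper's own proof: both unroll the forgetting recursion into an exponentially weighted geometric sum, use the decay $(1-\nu_k)^{n+1}\to 0$ to discount the pre-attack memory (which is exactly where the delay $i_a'$ comes from), conclude $\gamma_{ak}^2(i)\ll\gamma_{lk}^2(i)$ from the discrepancy hypothesis and hence $a_{ak}(i)\gg a_{lk}(i)$ via \eqref{eq: adaptive relative-variance combination rule}, and handle the non-degeneracy and post-$i_c$ clauses the same way. Your closing remarks on quantifying ``$\ll$'' and reading the post-$i_c$ weight condition asymptotically are reasonable refinements but do not change the route.
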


For a compromised node to send a message to its normal neighbors satisfying the conditions in \textit{Lemma 2}, it needs to compute $\bm{w}_{k,i-1}$.

\begin{lemma}\label{lemma: attacker can deduce w_k,i-1}
If a compromised node $a$ has knowledge of node $k$'s streaming data $\{\bm{d}_k(i), \bm{u}_{k,i}\}$ and 
the parameter $\mu_k$, then it can compute 
$\bm{w}_{k,i-1}$. 
\end{lemma}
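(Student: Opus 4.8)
The plan is to invert the adaptation step of the ATC recursion, treating $\bm{w}_{k,i-1}$ as the single unknown. First I would recall that at iteration $i$ node $k$ forms its intermediate estimate by
\[
\bm{\psi}_{k,i} = \bm{w}_{k,i-1} + \mu_k \bm{u}_{k,i}^*[\bm{d}_k(i) - \bm{u}_{k,i}\bm{w}_{k,i-1}],
\]
and then rewrite this as an affine map in $\bm{w}_{k,i-1}$,
\[
\bm{\psi}_{k,i} = (I_M - \mu_k \bm{u}_{k,i}^* \bm{u}_{k,i})\,\bm{w}_{k,i-1} + \mu_k \bm{u}_{k,i}^* \bm{d}_k(i).
\]
Since $a$ is a neighbor of $k$, it receives $\bm{\psi}_{k,i}$ as part of the diffusion exchange (node $k$ must broadcast it so that its other neighbors can perform their own combination step), and by hypothesis it knows $\{\bm{d}_k(i),\bm{u}_{k,i}\}$ and $\mu_k$. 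Hence every quantity appearing in this relation except $\bm{w}_{k,i-1}$ is available to the attacker. I would note in passing that a naive forward simulation of the recursion is not viable here, because $\bm{w}_{k,i}$ also depends on the intermediate estimates $\bm{\psi}_{l,i}$ of the \emph{other} normal neighbors $l$, which the attacker need not know; the backward inversion below avoids this difficulty by relying only on $k$'s own data.

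The next step is to solve for $\bm{w}_{k,i-1}$, which amounts to inverting the coefficient matrix $I_M - \mu_k \bm{u}_{k,i}^* \bm{u}_{k,i}$. Because $\bm{u}_{k,i}$ is a $1\times M$ regressor, the product $\bm{u}_{k,i}^*\bm{u}_{k,i}$ is a rank-one $M\times M$ matrix, so the coefficient matrix is a rank-one perturbation of the identity. I would apply the Sherman--Morrison identity to obtain the explicit inverse
\[
(I_M - \mu_k \bm{u}_{k,i}^* \bm{u}_{k,i})^{-1} = I_M + \frac{\mu_k\,\bm{u}_{k,i}^*\bm{u}_{k,i}}{1 - \mu_k\|\bm{u}_{k,i}\|^2},
\]
and conclude
\[
\bm{w}_{k,i-1} = (I_M - \mu_k \bm{u}_{k,i}^*\bm{u}_{k,i})^{-1}\bigl(\bm{\psi}_{k,i} - \mu_k \bm{u}_{k,i}^*\bm{d}_k(i)\bigr),
\]
which is computable entirely from quantities known to the attacker.

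The hard part is justifying the invertibility of the coefficient matrix, since the formula above degenerates exactly when $\mu_k\|\bm{u}_{k,i}\|^2 = 1$: the eigenvalue of $I_M - \mu_k \bm{u}_{k,i}^*\bm{u}_{k,i}$ along $\bm{u}_{k,i}^*$ equals $1 - \mu_k\|\bm{u}_{k,i}\|^2$, while all remaining eigenvalues equal $1$. I would dispatch this by invoking the standard LMS stability regime, in which the step size satisfies $\mu_k < 1/\|\bm{u}_{k,i}\|^2$; indeed $\mu_k$ is taken sufficiently small throughout, as already assumed for the MSD approximations in Section~\ref{preliminaries}. This guarantees $1 - \mu_k\|\bm{u}_{k,i}\|^2 > 0$, so the inverse is well defined and the computation of $\bm{w}_{k,i-1}$ goes through, which is precisely the quantity the attacker needs in order to select $\psi_{a,k,i}$ according to Lemma~2.
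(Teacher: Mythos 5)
Your proposal is correct and follows essentially the same route as the paper: invert the adaptation step $\bm{\psi}_{k,i} = \bm{w}_{k,i-1} + \mu_k \bm{u}_{k,i}^*[\bm{d}_k(i)-\bm{u}_{k,i}\bm{w}_{k,i-1}]$ to solve for $\bm{w}_{k,i-1}$ from quantities the attacker knows. In fact your version is the more careful one, since the paper writes the final step as a scalar-style division by $1-\mu_k\bm{u}_{k,i}^*\bm{u}_{k,i}$ even though that quantity is an $M\times M$ matrix, whereas you correctly treat it as a rank-one perturbation of the identity, invert it via Sherman--Morrison, and justify invertibility from the small step-size condition $\mu_k\|\bm{u}_{k,i}\|^2<1$.
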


Based on \textit{Lemma 2}, 
$\psi_{a,k,i} = \bm{w}_{k,i} + \Delta_{k,i}$, $\text{for some small } \| \Delta_{k,i}\| \geq 0$.
For stationary state estimation, we can select $\Delta_{k,i} = r_{k,i}^a (w_{k}^a - \bm{w}_{k,i-1})$, 
where $r_{k,i}^a$ is a small coefficient representing the step size, 
and $w_{k}^a - \bm{w}_{k,i-1}$ is the steepest slope vector towards $w_{k}^a$ at state $\bm{w}_{k,i}$.
When $\bm{w}_{k,i-1}$ converges to $w_{k}^a$, we have $\| \Delta_{k,i}\| = 0$ and thus  
$\psi_{a,k,i} = \bm{w}_{k, i}$, satisfying the condition for $i > i_c$.
For non-stationary state estimation, if $\Delta_{k,i} = r_{k,i}^a (w_{k,i}^a - \bm{w}_{k,i-1})$ 
then the state may converge to a state very close to $w_{k,i}^a$ but not $w_{k,i}^a$ exactly. 
Therefore, we propose the following attack model:
\begin{equation} \label{eq: attacker model}
{\psi}_{a,k,i} = \bm{w}_{k,i-1} + r_{k,i}^a (x_i - \bm{w}_{k,i-1})
\end{equation}
where $x_i$ is given by
\begin{equation*}
x_i=
\begin{cases}
w_k^a, &\text{for stationary estimation}\\
w_k^a + \theta_{k,i-1}^a + \frac{\Delta \theta_{k,i-1}^a}{r_{k,i}^a}, &\text{for non-stationary estimation}
\end{cases}
\end{equation*}
with $\Delta \theta_{k,i}^a = \theta_{k,i+1}^a - \theta_{k,i}^a$. 
The step size $r_{k,i}^a$ 
should be selected to satisfy \textit{Lemma 2}.
The following proposition provides a condition on $r_{k,i}^a $ that ensures 
the attack will achieve its objective.

\begin{proposition}\label{proposition: constraint of r_{k,i}^a}
If $r_{k,i}^a \geq 0$ is selected such that 
$\forall l \in \mathcal{N}_k \cap l \neq a$,
$(\forall i \geq i_a, $ $ \| r_{k,i}^a (x_i - \bm{w}_{k,i-1})\| \ll \| \bm{\psi}_{l,i} - \bm{w}_{k,i-1}\|)$ $ \wedge $ $ \neg (\forall i \geq i_a, r_{k,i}^a = 0)$, 
then the compromised node $a$ can realize the objective \eqref{eq: objective function} by using
${\psi}_{a,k,i}$ described in \eqref{eq: attacker model} as the communication message with $k$.
\end{proposition}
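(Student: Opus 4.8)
The plan is to show that the attack message $\psi_{a,k,i}$ of \eqref{eq: attacker model} satisfies the sufficient conditions of \textit{Lemma 2}, and then to chain \textit{Lemma 2} $\Rightarrow$ \textit{Lemma 1} $\Rightarrow$ objective \eqref{eq: objective function}. Since \textit{Lemma 2} is stated directly in terms of $\|\psi_{a,k,i}-\bm{w}_{k,i-1}\|$, the bulk of the work is to rewrite the hypotheses on $r_{k,i}^a$ as the corresponding hypotheses on this residual.

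First I would substitute \eqref{eq: attacker model} to obtain $\psi_{a,k,i}-\bm{w}_{k,i-1}=r_{k,i}^a(x_i-\bm{w}_{k,i-1})$, hence $\|\psi_{a,k,i}-\bm{w}_{k,i-1}\|=\|r_{k,i}^a(x_i-\bm{w}_{k,i-1})\|$. The proposition's first hypothesis, $\|r_{k,i}^a(x_i-\bm{w}_{k,i-1})\|\ll\|\bm{\psi}_{l,i}-\bm{w}_{k,i-1}\|$ for all $l\neq a$, is then \emph{identically} the ``much-smaller'' clause of \textit{Lemma 2}, and the second hypothesis $\neg(\forall i\geq i_a,\,r_{k,i}^a=0)$ forces the residual to be nonzero at some index, which is exactly the negated-zero clause. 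This already establishes the non-stationary statement of \textit{Lemma 2} in full.

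Next I would verify the extra stationary clause $\|\psi_{a,k,i}-\bm{w}_{k,i-1}\|=0$ for $i>i_c$. Here $x_i=w_k^a$, so the residual is $r_{k,i}^a(w_k^a-\bm{w}_{k,i-1})$, pointing along the steepest-slope direction toward the attacker's target. Once the dominance of $a_{ak}(i)$ guaranteed by \textit{Lemma 1} has drawn $\bm{w}_{k,i-1}$ into the $\epsilon$-ball around $w_k^a$ (which, by definition of $i_c$, occurs for $i>i_c$), the factor $(w_k^a-\bm{w}_{k,i-1})$ vanishes in the limit and the residual tends to zero as required. For the non-stationary target I would instead expand $r_{k,i}^a(x_i-\bm{w}_{k,i-1})=r_{k,i}^a(w_k^a+\theta_{k,i-1}^a-\bm{w}_{k,i-1})+\Delta\theta_{k,i-1}^a$; when $\bm{w}_{k,i-1}$ is near the moving target $w_{k,i-1}^a=w_k^a+\theta_{k,i-1}^a$, the first term is negligible and $\psi_{a,k,i}\approx w_k^a+\theta_{k,i}^a=w_{k,i}^a$, so the message tracks the drifting target. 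With the hypotheses of \textit{Lemma 2} now verified, applying \textit{Lemma 2} and then \textit{Lemma 1} yields convergence of $\bm{w}_{k,i}$ to $w_{k,i}^a$, which is objective \eqref{eq: objective function}.

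The main obstacle is the stationary clause for $i>i_c$: its verification is mildly self-referential, because the conclusion $\bm{w}_{k,i-1}\to w_k^a$ is precisely what makes the steepest-slope residual vanish. I would resolve this by treating $i_c$ as the \textit{Lemma 1}-guaranteed entry time into the $\epsilon$-ball and arguing that inside the ball the correction $r_{k,i}^a(w_k^a-\bm{w}_{k,i-1})$ is itself $O(\epsilon)$, so that $\bm{w}_{k,i}=w_k^a$ is a consistent, stable fixed point rather than something assumed at the outset.
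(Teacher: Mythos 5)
Your reduction of the hypotheses to the conditions of \textit{Lemma 2} is correct and matches the paper's opening line: substituting \eqref{eq: attacker model} gives $\psi_{a,k,i}-\bm{w}_{k,i-1}=r_{k,i}^a(x_i-\bm{w}_{k,i-1})$, so the ``much smaller'' clause and the negated-zero clause transfer verbatim. But the final step --- ``applying \textit{Lemma 2} and then \textit{Lemma 1} yields convergence of $\bm{w}_{k,i}$ to $w_{k,i}^a$'' --- is a genuine gap. \textit{Lemma 1} is a \emph{necessary} condition on the weights for the attack to succeed; it cannot be invoked in the forward direction to conclude that convergence to the attacker's target occurs. Even granting everything the two lemmas deliver, namely $a_{ak}(i)\gg a_{lk}(i)$ and hence $\bm{w}_{k,i}\approx\psi_{a,k,i}$, you have only established that the state obeys the recursion $\bm{w}_{k,i}\approx(1-r_{k,i}^a)\bm{w}_{k,i-1}+r_{k,i}^a x_i$. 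Whether this recursion converges, and to what, is the actual content of the proposition, and your proposal never analyzes it. The paper does: it writes the recursion as $X_i\approx A_i+BX_{i-1}$ with $B=1-r_{k,i}^a\in(0,1)$, shows the error decays geometrically since $|B|<1$, and then computes the limit.

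This omission is most consequential in the non-stationary case. The paper's fixed-point computation shows that sending $x_i=w_k^a+\theta_{k,i-1}^a$ makes $\bm{w}_{k,i}$ converge to $w_k^a+\theta_{k,i}^a-\frac{\Delta\theta_{k,i}^a}{r_{k,i}^a}$, i.e., the state \emph{lags} the moving target by $\Delta\theta_{k,i}^a/r_{k,i}^a$ --- a quantity that is not small, since $r_{k,i}^a$ is small. The extra term $\frac{\Delta\theta_{k,i-1}^a}{r_{k,i}^a}$ in the definition of $x_i$ in \eqref{eq: attacker model} exists precisely to cancel this lag, and showing that it does so is the heart of the proof. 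Your expansion only verifies that $w_{k,i}^a$ is a consistent fixed point of the dynamics (the paper performs the same check at the very end as a sanity test), but a consistency check is not a convergence proof: you assume $\bm{w}_{k,i-1}$ is already near the moving target to conclude the message tracks it. The same circularity affects your treatment of the stationary clause for $i>i_c$: you propose to resolve it by ``treating $i_c$ as the \textit{Lemma 1}-guaranteed entry time into the $\epsilon$-ball,'' but no lemma guarantees entry into that ball --- only the geometric-contraction argument on the recursion does. To repair the proof, insert the linear-recursion analysis between the reduction to \textit{Lemma 2} and the conclusion.
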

Note that for a fixed value $r_{k,i}^a$, 
it is possible that $\| r_{k,i}^a (x_i - \bm{w}_{k,i-1})\| \ll \| \bm{\psi}_{l,i} - \bm{w}_{k,i-1}\|$ 
does not hold for some iteration $i$ because of the randomness of variables. 
Yet we can always set $r_{k,i}^a = 0$ for such iterations $i$.   
However, in practice, the attack can succeed by using a small fixed value of $r_{k,i}^a > 0$. 
The reason may be that because of the smoothing property of the weight, estimation is robust to 
infrequent small values of $\| \bm{\psi}_{l,i} - \bm{w}_{k,i-1}\|$ caused by randomness.

\section{Network Attack Design}\label{the section of network attack model}
In this section, we consider the case when there are multiple compromised nodes 
using the attack model presented above.
Our objective is to determine the minimum set of nodes to compromise in order 
to attack the entire network. 
It should be noted that there is no need for multiple compromised nodes $a_1, a_2, \ldots$ 
to attack a single normal node $k$ in their neighborhood. 
The reason is that if each compromised node sends the same message to node $k$, 
we can consider only one node with 
$a_{ak}(i) = a_{a_1,k}(i) + a_{a_2,k}(i) + \ldots $, 
and design the attack using only $a_{ak}(i)$.

First, we investigate if a compromised node could indirectly impact its neighbors' neighbors.
Consider the case when node $k$ is connected to a compromised node $a$ and a normal node $l$, 
and $a$ is not connected to $l$. 
Without loss of generality, we set $\nu_k = 1$ and we use $\bm{R}_1$ and $\bm{R}_2$ 
to denote the two random variables $\mu_k \bm{u}_{k,i}^* e_{k}(i)$ and $\mu_l \bm{u}_{l,i}^* e_{l}(i)$. 
Then, for $i > i_a$, the weight assigned to node $k$ by node $l$ is given by
\begin{equation}\label{eq: weight lk}
a_{kl}(i) = \frac{\|\bm{w}_{k,i-1} + \bm{R}_1 - \bm{w}_{l,i-1}\|^{-2}}{\|\bm{w}_{k,i-1} + \bm{R}_1 - \bm{w}_{l,i-1}\|^{-2} + \|\bm{R}_2\|^{-2}}
\end{equation}

Suppose the compromised node $a$ could affect nodes beyond its neighborhood, 
for $i > i_a + n$, $\bm{w}_{k,i}$ converges to $w_k^a$ and $\bm{w}_{l,i}$ converges to $w_l^a$. 
Equation \eqref{eq: weight lk} can be written as
\begin{equation}\label{eq:akl}
    a_{kl}(i) = \frac{\|\bm{R}_2\|^{2}}{\|w_k^a + \bm{R}_1 - w_l^a\|^{2} + \|\bm{R}_2\|^{2}}
\end{equation}
and we have
\begin{equation}\label{eq:wla}
   w_l^a = a_{kl}(i) (w_k^a + \bm{R}_1) + (1-a_{kl}(i)) (w_l^a + \bm{R}_2)
\end{equation}
From \eqref{eq:akl} and \eqref{eq:wla}, we obtain
\begin{equation}\label{eq: 9}
        \frac{\|\bm{R}_2\|^{2}}{\|w_k^a + \bm{R}_1 - w_l^a\|^{2} + \|\bm{R}_2\|^{2}} (w_l^a - w_k^a + \bm{R}_2 - \bm{R}_1) = \bm{R}_2
\end{equation}
Since $\frac{\|\bm{R}_2\|^{2}}{\|w_k^a + \bm{R}_1 - w_l^a\|^{2} + \|\bm{R}_2\|^{2}}$ and $(\bm{R}_2 - \bm{R}_1)$ are random variables, and $(w_l^a - w_k^a)$ is a constant, \eqref{eq: 9} does not hold unless both $a_{kl}(i) = 0$ and $\bm{R}_2 = 0$. In this case, $a_{ll}(i) = 1$ and $\mu_l \bm{u}_{l,i}^* e_{l}(i) = 0$, which means $k$ does not affect $l$ and $l$ will converge to its true state.

Since a compromised node cannot affect nodes beyond its neighborhood, 
finding the minimum set of nodes to compromise in order to attack the entire network 
is equivalent to finding a minimum dominating set of the network \cite{DBLP:journals/dam/HedetniemiLP86}. 
It should be noted that finding a minimum dominating set 
of a network is an NP-complete problem but approximate solutions using greedy approaches work very well \cite{DBLP:journals/dam/HedetniemiLP86}. 


\section{Resilient Distributed Diffusion}
\subsection{Resilience Analysis}
The cost function for a normal agent $k$ at iteration $i$ is:
\begin{equation*}
    \begin{split}
J_k(\bm{w}_{k,i}) &= J_k(\sum_{l \in \mathcal{N}_k} a_{lk}(i) \bm{\psi}_{l,i}) \\
&= \mathbb{E} \{\|\bm{d}_k(i) - \bm{u}_{k,i}\sum_{l \in \mathcal{N}_k} a_{lk}(i) \bm{\psi}_{l,i}\|^2\} \\
&= \mathbb{E} \{ \|\sum_{l \in \mathcal{N}_k} a_{lk}(i) (\bm{d}_k(i) - \bm{u}_{k,i} \bm{\psi}_{l,i})\|^2\}\\
&= \sum_{l \in \mathcal{N}_k} a_{lk}^2(i) J_k(\bm{\psi}_{l,i})
    \end{split}
\end{equation*}
Obviously, the cost of $k$ is related to its neighbors' assigned weights and cost. Since $a_{lk}^2(i) J_k(\bm{\psi}_{l,i})  \propto  \frac{J_k(\bm{\psi}_{l,i})}{\gamma_{l,k}^4(i)}$, we define the contribution of $l$ to its neighbor $k$'s cost $J_k(\bm{w}_{k,i})$ as
\begin{equation*}
    c_{lk}(i) =   \frac{J_k(\bm{\psi}_{l,i})}{\gamma_{l,k}^4(i)}
\end{equation*}
To compute the cost $J_k(\bm{\psi}_{l,i}) = \mathbb{E} \|\bm{d}_k(i) - \bm{u}_{k,i} \bm{\psi}_{l,i}\|^2$, agent $k$ has to store all the streaming data. 
Alternatively, we can approximate $J_k(\bm{\psi}_{l,i})$ using a moving average based on
the previous iterations.

We assume that a normal node has at most $F$ neighbors that are  compromised 
nodes~\cite{DBLP:journals/jsac/LeBlancZKS13}. 
Specifically, we define:
\begin{definition}\label{F-local definition}
{\rm{($F$-local model)}} A node satisfies the $F$-local model if there is at most $F$ compromised nodes in its neighborhood.
\end{definition}
In general, normal nodes can select different values of $F$. 
While the paper focuses on the $F$-local model, bounds on the global  number of adversaries or 
bounds that consider the connectivity of the network are possible~\cite{DBLP:journals/jsac/LeBlancZKS13}. 

Given the $F$-local assumption, node $k$ has at most $F$ neighbors that may be compromised. 
Motivated by the W-MSR algorithm~\cite{DBLP:journals/jsac/LeBlancZKS13}, 
we modify DLMSAW as follows: 
\begin{enumerate}
\item If $F \geq |\mathcal{N}_k|$, agent $k$ updates its current state $\bm{w}_{k,i}$ using only its own $\bm{\psi}_{k,i}$, which degenerates distributed diffusion to non-cooperative LMS.
\item If $F < |\mathcal{N}_k|$, agent $k$ at each iteration $i$ computes $c_{lk}(i)$ for 
$l \in \mathcal{N}_k \text{ and } l \neq k$, sorts the results, 
and computes the set of  nodes $\mathcal{R}_k(i)$ 
consisting of $l$ for the $F$ largest $c_{lk}(i)$. Then, the agent
updates its current weight $a_{lk}(i)$ and state $\bm{w}_{k,i}$ without using information obtained 
from nodes in $\mathcal{R}_k(i)$.
\end{enumerate}
The proposed resilient distributed diffusion algorithm is summarized in \textit{Algorithm 1}.

\renewcommand{\algorithmicrequire}{\textbf{Set}}
    \begin{algorithm}\small
        \begin{algorithmic}[1] 
             \Require $ \gamma_{lk}^2(-1)=0$ , maintain $n \times 1$ matrix $D_{k,i} = \bm{0}_{n \times 1}$ and $n \times M$ matrix $U_{k,i} = \bm{0}_{n \times M}$, for all $k=1,2,...,N$, and $l \in \mathcal{N}_k$ 
             \ForAll {$k=1,2,...,N,  i \geq 0$}
             \State $e_{k}(i) = \bm{d}_{k}(i) -\bm{u}_{k,i}\bm{w}_{k,i-1}$
             \State $\bm{\psi}_{k,i}=\bm{w}_{k,i-1}+\mu_k \bm{u}_{k,i}^* e_{k}(i)$
             \If{$ F \geq |\mathcal{N}_k|$}
             \State $\bm{w}_{k,i} = \bm{\psi}_{k,i}$
             \Else
             \State $\gamma_{lk}^2(i) = (1-\nu_k)\gamma_{lk}^2(i-1)+\nu_k \| \bm{\psi}_{l,i}-\bm{w}_{k,i-1}\| ^2, l \in \mathcal{N}_k$
             \State Update $D_{k,i}$ and $U_{k,i}$ by adding $\bm{d}_k(i)$ and $\bm{u}_{k,i}$ and removing $\bm{d}_k(i-n)$ and $\bm{u}_{k,i-n}$
             \State $J_k(\bm{\psi}_{l,i}) = \mathbb{E}\|D_{k,i} - U_{k,i} \bm{\psi}_{l,i}\|^2, l \in \mathcal{N}_k$
             \State $c_{lk}(i) = \frac{J_k(\bm{\psi}_{l,i})}{\gamma_{l,k}^4(i)}, l \in \mathcal{N}_k$
             \State Sort $c_{lk}(i)$, get $\mathcal{R}_k(i)$ consisting of $l$ for the $F$ largest $c_{lk}(i)$
             \State ${a}_{lk}(i) = \frac {\gamma_{lk}^{-2}(i)} {\sum_{m \in \mathcal{N}_k \backslash R_k(i)}\gamma_{mk}^{-2}(i) }, l \in \mathcal{N}_k \backslash \mathcal{R}_k(i)$
             \State $\bm{w}_{k,i} = \sum_{l \in N_{k} \backslash \mathcal{R}_k(i)} {a}_{lk}(i) \bm{\psi}_{l,i}$
             \EndIf
            \EndFor
        \end{algorithmic}
        \caption{\small{Resilient distributed diffusion under $F$-local bounds}}
    \end{algorithm}

\begin{proposition}
If the number of compromised nodes satisfies the $F$-local model, then 
\textit{Algorithm 1} is resilient to any message falsification 
byzantine attack which aims at making normal nodes converge to a selected state.
\end{proposition}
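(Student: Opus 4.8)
The plan is to argue by contradiction, combining the characterization of a successful attack from \textit{Lemma 1} with the screening rule of \textit{Algorithm 1}. The case $F \geq |\mathcal{N}_k|$ is immediate: agent $k$ sets $\bm{w}_{k,i} = \bm{\psi}_{k,i}$, so it runs noncooperative LMS, which ignores every incoming message and converges to $w_k^0$; hence \eqref{eq: resilient distributed diffusion} holds trivially. All the content is therefore in the case $F < |\mathcal{N}_k|$, where I would assume for contradiction that some compromised neighbor $a$ drives $k$ toward a target $w_k^a \neq w_k^0$ as in \eqref{eq: objective function}, and show that the contribution metric $c_{ak}(i)$ flags $a$ for removal.

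First I would recall from \textit{Lemma 1} that, to succeed, $a$ must eventually dominate the combination, i.e., $a_{ak}(i) \gg a_{lk}(i)$ for every normal $l \in \mathcal{N}_k$. By the combination rule \eqref{eq: adaptive relative-variance combination rule} this is equivalent to $\gamma_{ak}^2(i) \ll \gamma_{lk}^2(i)$, so a successful attack necessarily pushes the attacker's variance $\gamma_{ak}^2(i)$, and hence $\gamma_{ak}^4(i)$, strictly below that of every normal neighbor, undercutting even the natural steady-state innovation floor of a neighbor that shares $k$'s objective.

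Next I would bound the numerator $J_k(\bm{\psi}_{a,i})$ from below. Since $k$'s data obey the true model with state $w_k^0$, the cost is the quadratic $J_k(w) = J_k(w_k^0) + (w - w_k^0)^* R_{u,k}(w - w_k^0)$, minimized only at $w_k^0$. Because the attack message \eqref{eq: attacker model} satisfies $\bm{\psi}_{a,i} \to w_k^a$ as $\bm{w}_{k,i-1} \to w_k^a$, the numerator stays bounded below by $(w_k^a - w_k^0)^* R_{u,k}(w_k^a - w_k^0) > 0$, strictly above the noise floor, whereas a cooperating normal neighbor has $\bm{\psi}_{l,i} \to w_k^0$ and hence $J_k(\bm{\psi}_{l,i})$ approaching that floor. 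Combining with the previous step, $c_{ak}(i) = J_k(\bm{\psi}_{a,i})/\gamma_{ak}^4(i)$ has a strictly larger numerator and a strictly smaller denominator than $c_{lk}(i)$ for every normal $l$, so eventually $c_{ak}(i) > c_{lk}(i)$ for all of them; in the stationary case $\gamma_{ak}^2(i) \to 0$ even forces $c_{ak}(i) \to \infty$.

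Finally I would close the loop using the $F$-local model: there are at most $F$ compromised neighbors, and each one that mounts a successful attack yields one of the $F$ largest values of $c_{lk}(i)$, so it lands in $\mathcal{R}_k(i)$ and is excluded from the update, contradicting the assumption that $k$ was driven to $w_k^a$. The only remaining option for a compromised neighbor is to send messages of negligible weight (large $\gamma_{ak}^2(i)$), which by \eqref{eq: adaptive relative-variance combination rule} do not perturb $\bm{w}_{k,i}$. With all harmful messages removed, $k$ combines only with faithful neighbors pursuing $w_k^0$, and the standard DLMSAW convergence yields $\bm{w}_{k,i} \to w_k^0$, establishing \eqref{eq: resilient distributed diffusion}. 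The hard part will be the non-stationary case, where \textit{Lemma 1} only guarantees $\gamma_{ak}^2(i) \ll \gamma_{lk}^2(i)$ rather than $\gamma_{ak}^2(i) \to 0$: there one must show the ratio $c_{ak}(i)/c_{lk}(i)$ exceeds one uniformly in $i$ by simultaneously controlling the numerator gap while tracking the moving target $w_k^a + \theta_{k,i}^a$ and the denominator ratio, and one must also check that discarding a few legitimate same-objective neighbors through screening can only degrade the MSD, not the limit in \eqref{eq: resilient distributed diffusion}.
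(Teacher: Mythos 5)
Your proof is correct and follows essentially the same route as the paper's: both invoke \textit{Lemma 1} to argue that a successful attack forces $a_{ak}(i) \gg a_{lk}(i)$, hence $\gamma_{ak}^2(i) \ll \gamma_{lk}^2(i)$ and $c_{ak}(i) \gg c_{lk}(i)$, so the attacker lands in $\mathcal{R}_k(i)$ and is discarded, while the possible removal of normal neighbors costs only MSD performance, not convergence. Your explicit lower bound on the numerator $J_k(\bm{\psi}_{a,i})$ via $w_k^a \neq w_k^0$ is a refinement the paper leaves implicit, but it does not change the argument.
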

\begin{proof}
Given the $F$-local model, there are at most $F$ neighbors of a normal agent $k$ that are compromised. In the case of $F \geq |\mathcal{N}_k|$, $k$ updates the state without using information from neighbors. 
Next, consider the case when $F < |\mathcal{N}_k|$. The algorithm removes the $F$ largest cost contributions. 
Based on the proof of \textit{Lemma 1}, we have that only for $i$ subject to $\forall l \in \mathcal{N}_k \cap l \neq a, a_{ak}(i) \gg a_{lk}(i)$, node $k$ makes progress to converge to attacker’s selected state (or stays the current state), rendering $ a_{lk}(i)  \rightarrow 0$. As a result, $c_{lk}(i) \rightarrow 0$ and thus $c_{ak}(i) \gg c_{lk}(i)$. For each iteration $i$, any compromised node $a \in \{a_1, a_2, \ldots\}$ that drives $k$ toward $w^a_{k,i}$ must be within $\mathcal{R}_k(i)$ and the message from which will be discarded. Thus,
\begin{equation*}
    \bm{w}_{k, i} = \sum_{l \in \mathcal{N}_k \backslash \mathcal{R}_k(i)} a_{lk}(i) \bm{\psi}_{l,i}
\end{equation*}
meaning the algorithm performs the diffusion adaptation as if there were no compromised node. 
Note that messages from normal neighbors may be discarded since $F$ may be greater than the number of compromised neighbors. 
However, the distributed diffusion algorithm is robust to node and link failures, 
and it converges to the true state despite the links to some or all of its neighbors fail. 
Finally, the algorithm will converge and equation \eqref{eq: resilient distributed diffusion} holds, 
showing the resilience of the \textit{Algorithm 1}.
\end{proof}

\subsection{Attacks against Resilient Distributed Diffusion}
If the number of compromised nodes satisfies the $F$-local model, 
\textit{Algorithm 1} is resilient to message falsification byzantine attacks aiming at 
driving normal nodes converge to a selected state. An important question is if there are attacks
against resilient distributed diffusion. 
The attacker could try to make the messages it sends to normal nodes not being discarded 
but affecting the convergence of normal agents. 
This must be achieved by selecting $c_{ak}(i)$ not to be one of the $F$ largest values
and thus be smaller than the value of some normal neighbor of $k$. 
In this case, $J_k(\bm{w}_{k,i})$ is even smaller than when this value is discarded
but the attacker's goal is to maximize $J_k(\bm{w}_{k,i})$. Thus, the optimal strategy for
the attacker is not to contribute cost less than a normal neighbor of $k$, and as a result, 
the information from a compromised node will be discarded.

\subsection{MSD Performance Analysis}
Each normal node must select the parameter $ F $ in order to perform resilient diffusion. 
However, if $F$ is large there will be performance degradation as measured by the MSD.
In the following, we summarize the trade-off between MSD performance and resilience.

\textit{Algorithm 1} cannot ensure resilience if $F$ is selected less than the number 
of compromised nodes in one normal agent's neighborhood. In such cases, messages from 
compromised nodes may not be entirely removed. 
However, 
as we increase $F$, the MSD level will increase.
Consider a network without compromised nodes with $N$ normal agents running 
\mbox{\textit{Algorithm 1}}.  Let $\{\sigma_{v,1}^2, \ldots, \sigma_{v,k}^2, \ldots, \sigma_{v,N}^2\}$ be the noise variance. Each agent $k$ removes the message coming from $l \in \mathcal{R}_k(i)$. 
Suppose there is a normal agent $n$, which happens to be in $\mathcal{R}_k(i)$ for every agent $k$ 
in the network at every iteration. In this case, the network will be divided into two sub-networks:
The first will consist of all the agents in the original network excluding agent $n$ and the 
second will consist of $n$ itself. The MSD of the first sub-network is
\begin{equation*}
\text{MSD}_{\text{sub1}} \approx \frac{\mu M}{2} \cdot \frac{1}{(N-1)^2} (\sum_{k=1}^N \sigma_{v,k}^2 - \sigma_{v,n}^2)
\end{equation*}
while the MSD of the second sub-network is
\begin{equation*}
\text{MSD}_{\text{sub2}} \approx \frac{\mu M}{2} \cdot \sigma_{v,n}^2.
\end{equation*}
The MSD of the entire network is
\begin{equation*}
\text{MSD}_{\text{network}}^{\text{resilient}} \approx \frac{\mu M}{2} \cdot (\frac{1}{(N-1)N} (\sum_{k=1}^N \sigma_{v,k}^2 - \sigma_{v,n}^2) + \frac{1}{N} \sigma_{v,n}^2)
\end{equation*}
The MSD of the network performing the original diffusion algorithm is given by
\begin{equation*}
\text{MSD}_{\text{network}}^{\text{original}} \approx \frac{\mu M}{2} \cdot (\frac{1}{N^2} \sum_{k=1}^N \sigma_{v,k}^2)
\end{equation*}
and the difference can be expressed as
\begin{equation*}
\begin{split}
    &\quad \text{MSD}_{\text{network}}^{\text{resilient}} - \text{MSD}_{\text{network}}^{\text{original}} \\
    &\approx \frac{\mu M}{2} \cdot (\frac{1}{N^2(N-1)} \sum_{k=1}^N \sigma_{v,k}^2 +  \frac{N-2}{N(N-1)} \sigma_{v,n}^2) > 0
\end{split}
\end{equation*}
$\text{MSD}_{\text{network}}^{\text{resilient}}$ is always larger than $\text{MSD}_{\text{network}}^{\text{original}}$, meaning the estimation performance of \mbox{\textit{Algorithm 1}} is worse than the original diffusion algorithm. 
As $F$ is increased, agents are more likely to cut links with most of their normal neighbors 
and are likely to be divided into separate sub-networks. In the worst case, agents discard all the information from their neighbors and perform the estimation tasks only using their own data. 
In this case, the algorithm will degenerate to noncooperative estimation and incur an $N$-fold MSD performance deterioration. 

\section{Evaluation}
We first evaluate the proposed attack model using a multi-target localization problem for both  stationary and non-stationary targets. 
We then evaluate the proposed resilient algorithm for stationary estimation (we omit non-stationary estimation because of length limitations). 

The network with $N=100$ agents 
is shown in \figref{fig: initial stationary network topology}. 
For stationary target localization, the coordinates of the two stationary targets are given by
\begin{equation*}
w_{k}^0=
\begin{cases}
[0.1, 0.1]^\top, & \text{for } k \text{ depicted in blue}\\
[0.9, 0.9]^\top, & \text{for } k \text{ depicted in green}
\end{cases}
\end{equation*}

 
If the weights between agents $k$ and $l$ are 
such that $a_{lk}(i) < 0.01$ and $a_{kl}(i) < 0.01$, 
the link between them is deleted. 
Regression data is white Gaussian with diagonal covariance matrices 
$R_{u,k} = \sigma_{u,k}^2 I_M$, $\sigma_{u,k}^2 \in [0.8, 1.2]$
and noise variance $\sigma_k^2 \in [0.1, 0.2]$.
The step size $\mu_k = 0.01$ and the forgetting factor $\nu_k = 0.01$ are set 
uniformly across the network. 

\figref{fig:without attack, after simulation} shows the network topology at the end of the simulation 
using DLMSAW with no attack. 
Only the links between agents estimating the same target are kept, illustrating the robustness of DLMSAW to multi-task networks.
The MSD level of the network for DLMSAW and noncooperative LMS is shown in 
\figref{fig: MSD level for stationary targets}, indicating the MSD performance improves by cooperation.

\subsection{Attack model}
\noindent \textbf{Stationary targets}: 
Suppose there are four agents in the network that are compromised by an attacker.
Compromised nodes deploy attacks on all of their neighbors using the attack model 
described in \eqref{eq: attacker model}. 
Attack parameters are selected uniformly across the compromised agents as 
$w_k^a = [0.5, 0.5]^\top$ and $r_{k,i}^a = 0.002$.
\figref{fig:attacked after simulation} shows the network topology at the end of the simulation 
(compromised nodes are red with yellow center, and normal agents converging to $w_k^a$ are denoted in red nodes). We find all the neighbors of the four compromised nodes have been successfully driven to converge to $w_k^a$,
have cut down all the links with their normal neighbors, and communicate only with the compromised nodes. 
Normal agents not communicating with the compromised nodes will end up converging to their desired targets, illustrating the conclusion in section \Rmnum{5}. 
\figref{fig: stationary_convergence_trends} shows the convergence of nodes affected by compromised nodes. 
The MSD level for DLMSAW under attack shown in \figref{fig: MSD level for stationary targets} is very high, whereas the MSD level for noncooperative LMS is not affected by the attack. 

\noindent \textbf{Non-stationary targets}: 
We assume targets with dynamics given by
\begin{spacing}{0.4}
\begin{equation*}
\bm{w}_{k,i}^0 =
\begin{cases}
\begin{bmatrix}
    0.1 + 0.1 \cos(2\pi\omega i) \\
    0.1 + 0.1 \sin(2\pi\omega i)
\end{bmatrix}
, \text{for } k \text{ depicted in blue} \\
\\
\begin{bmatrix}
    0.9 + 0.1 \cos(2\pi\omega i) \\
    0.9 + 0.1 \sin(2\pi\omega i)
\end{bmatrix}
, \text{for } k \text{ depicted in green}
\end{cases}
\end{equation*}
\\
\end{spacing}
\noindent where $\omega = \frac{1}{2000}$.
The attack parameters are selected uniformly across the compromised agents as $w_k^a = [0.5, 0.5]^\top, r_{k,i}^a = 0.002$ and $\theta_{k,i}^a = [0.1 \cos(2\pi\omega_a i),$ $0.1 \sin(2\pi\omega_a i)]^\top$, $\Delta \theta_{k,i}^a = [-0.2 \pi \omega_a \sin(2\pi\omega_a i),$ $0.2 \pi \omega_a \cos(2\pi\omega_a i)]^\top$, where $\omega_a = \frac{1}{2000}$. The attacked network topology at the end of the simulation is the same as in \figref{fig:attacked after simulation}. 
\figref{fig: non-stationary_convergence_trends} shows the average state dynamics of the neighbors of the compromised nodes. For clarity, we only show the state for the first 1900 iterations. We find that by 500 iterations, neighbors of compromised nodes have already converged to $w_{k,i}^a = w_k^a + \theta_{k,i}^a$. \figref{fig:non-stationary MSD level} shows the MSD level.

\subsection{Resilient Diffusion}
Compromised nodes are selected as described above. The cost $J_k(\bm{\psi}_{l,i})$ is approximated using the last $100$ iterations' streaming data. 
$F$ is selected by each normal agent as the expected number of compromised neighbors (We adopt uniform $F$ here but it can be distinct for each normal agent). 
For $F = 1$, the network topology  at the end of the simulation is shown in \figref{fig:network_Flocal}
illustrating the resilience of the algorithm.
The MSD level of the network for noncooperative LMS and the resilient algorithm for  $F = 0, 1, \ldots, 5$ is shown in \figref{fig: MSD_Flocal}. When $F = 0$, the algorithm is the same as the original DLMSAW, which is not resilient to attacks and has a large MSD level. 
Since each normal agent has at most one compromised node neighbor, by selecting $F = 1$ the algorithm is resilient and has a low MSD level. By increasing $F$, the algorithm is still resilient, but the MSD level increases as well, and gradually approaches the MSD level of noncooperative LMS.

\begin{figure*}[htbp]
\vspace{0cm} 
\setlength{\abovecaptionskip}{0.6cm}  
\begin{minipage}[t]{0.33\linewidth}
\centering
\includegraphics[width=0.9\textwidth, trim=1.5cm 1.5cm 1.5cm 1.5cm]{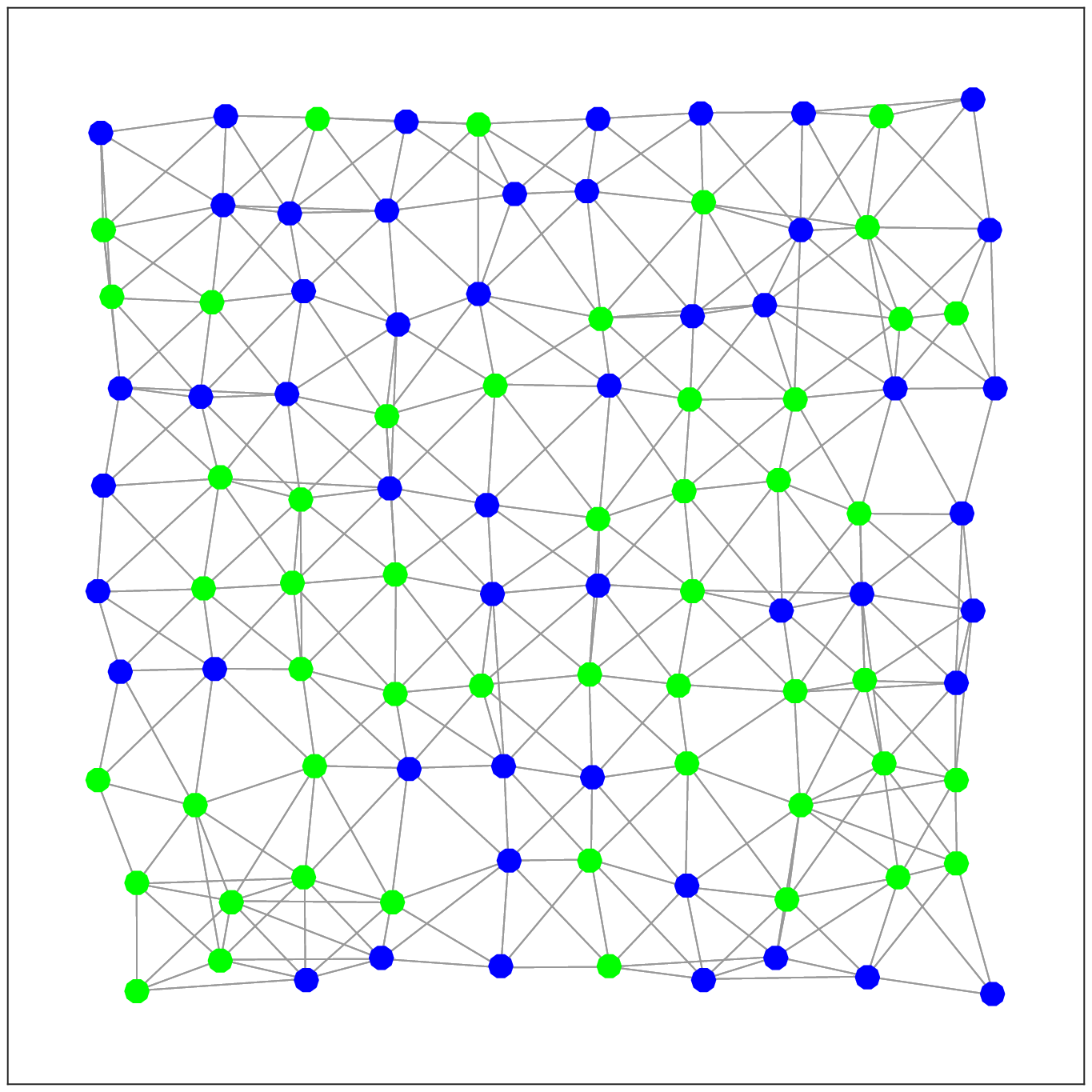}
\caption{Initial network topology}\label{fig: initial stationary network topology}
\end{minipage}%
\begin{minipage}[t]{0.33\linewidth}
\centering
\includegraphics[width=0.9\textwidth,  trim=1.5cm 1.5cm 1.5cm 1.5cm]{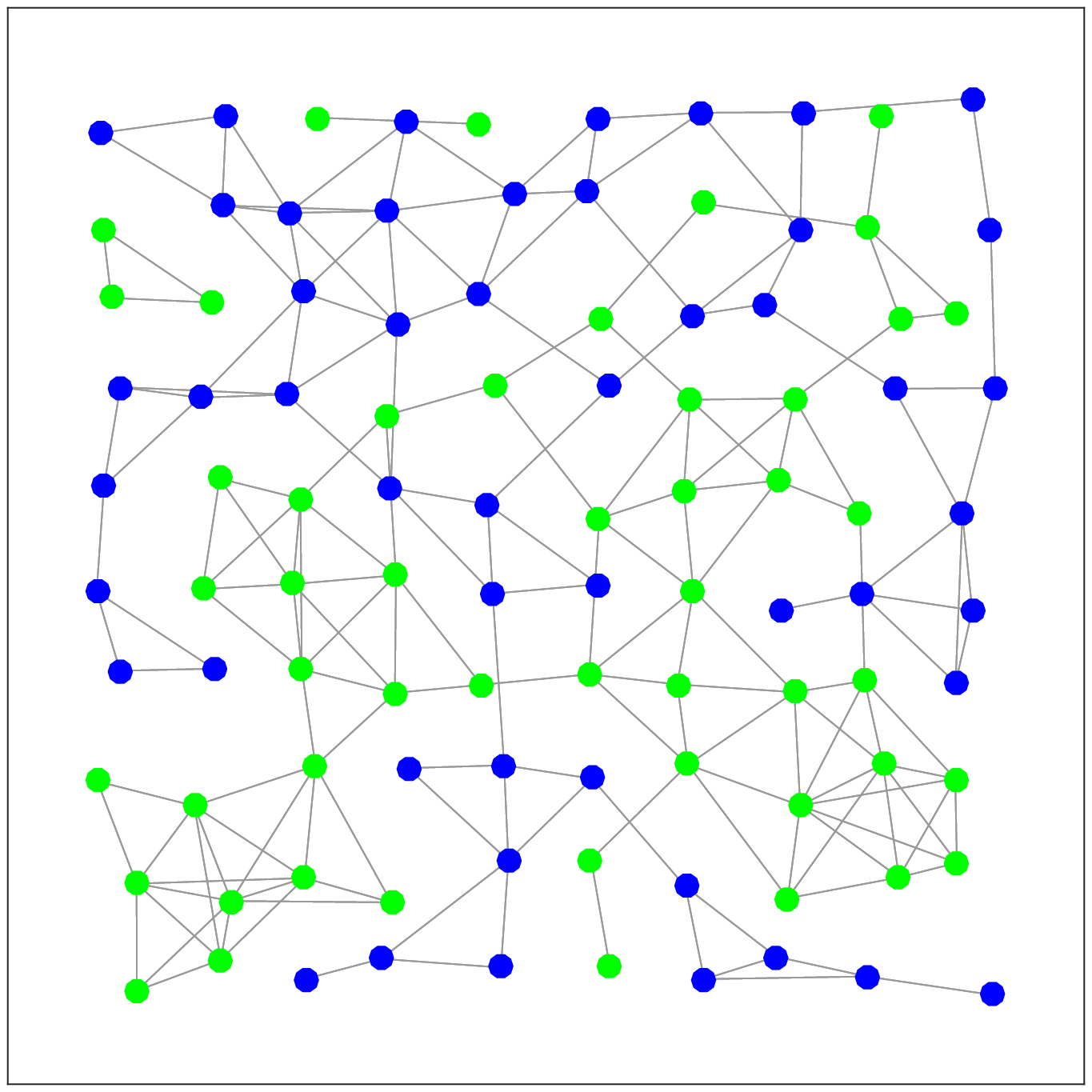}
\caption{Network topology at the end of the simulation running DLMSAW with no attack}\label{fig:without attack, after simulation}
\end{minipage}%
\begin{minipage}[t]{0.33\linewidth}
\centering
\includegraphics[width=0.9\textwidth, trim=1cm 1cm 1cm 1cm]{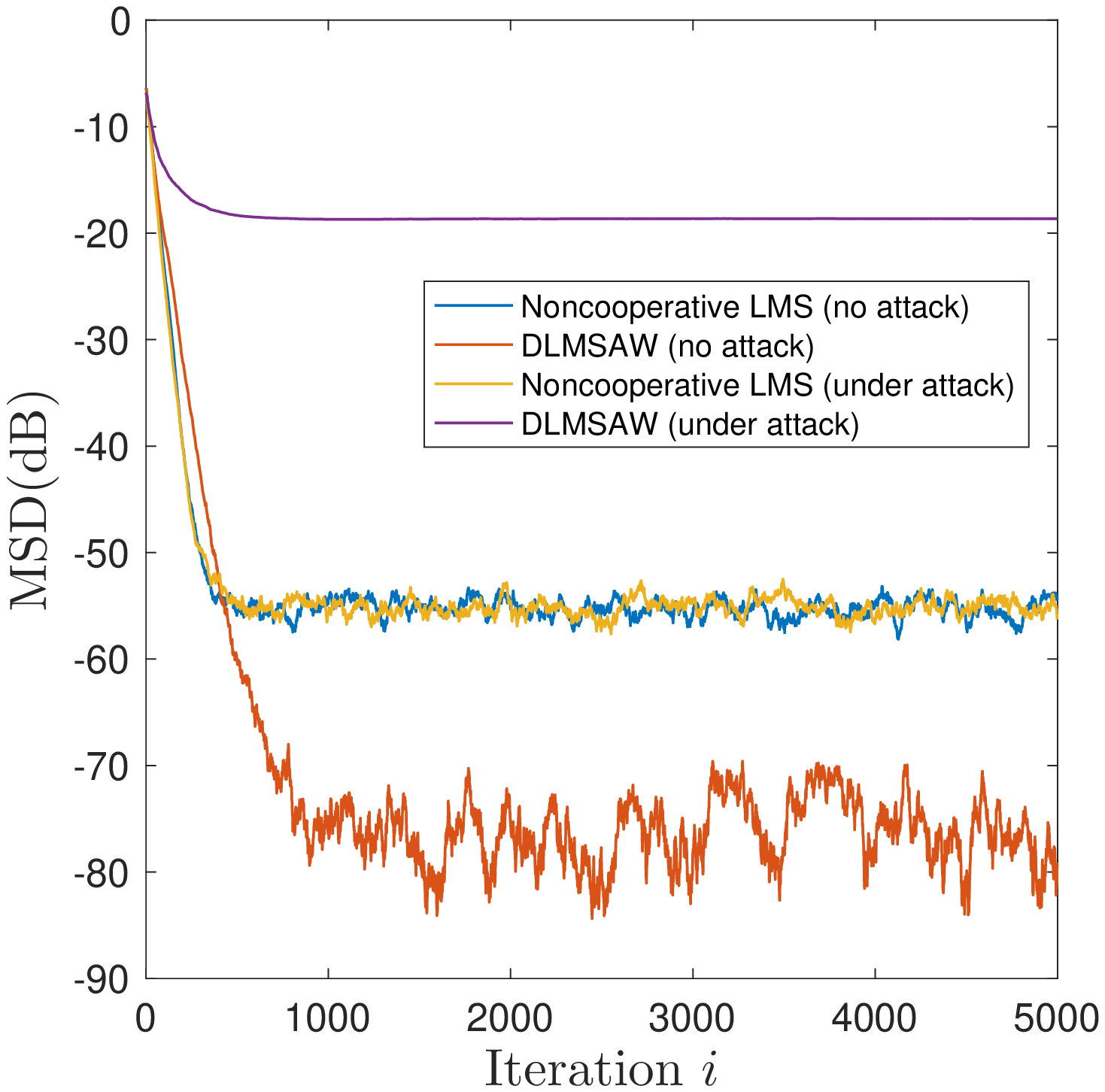}
\caption{MSD level for noncooperative LMS and DLMSAW (stationary targets)}\label{fig: MSD level for stationary targets}
\end{minipage}%
\end{figure*} 

\begin{figure*}[htbp]
\vspace{0cm} 
\setlength{\abovecaptionskip}{0.6cm}  
\begin{minipage}[t]{0.33\linewidth}
\centering
\includegraphics[width=0.9\textwidth, trim=1.5cm 1.5cm 1.5cm 1.5cm]{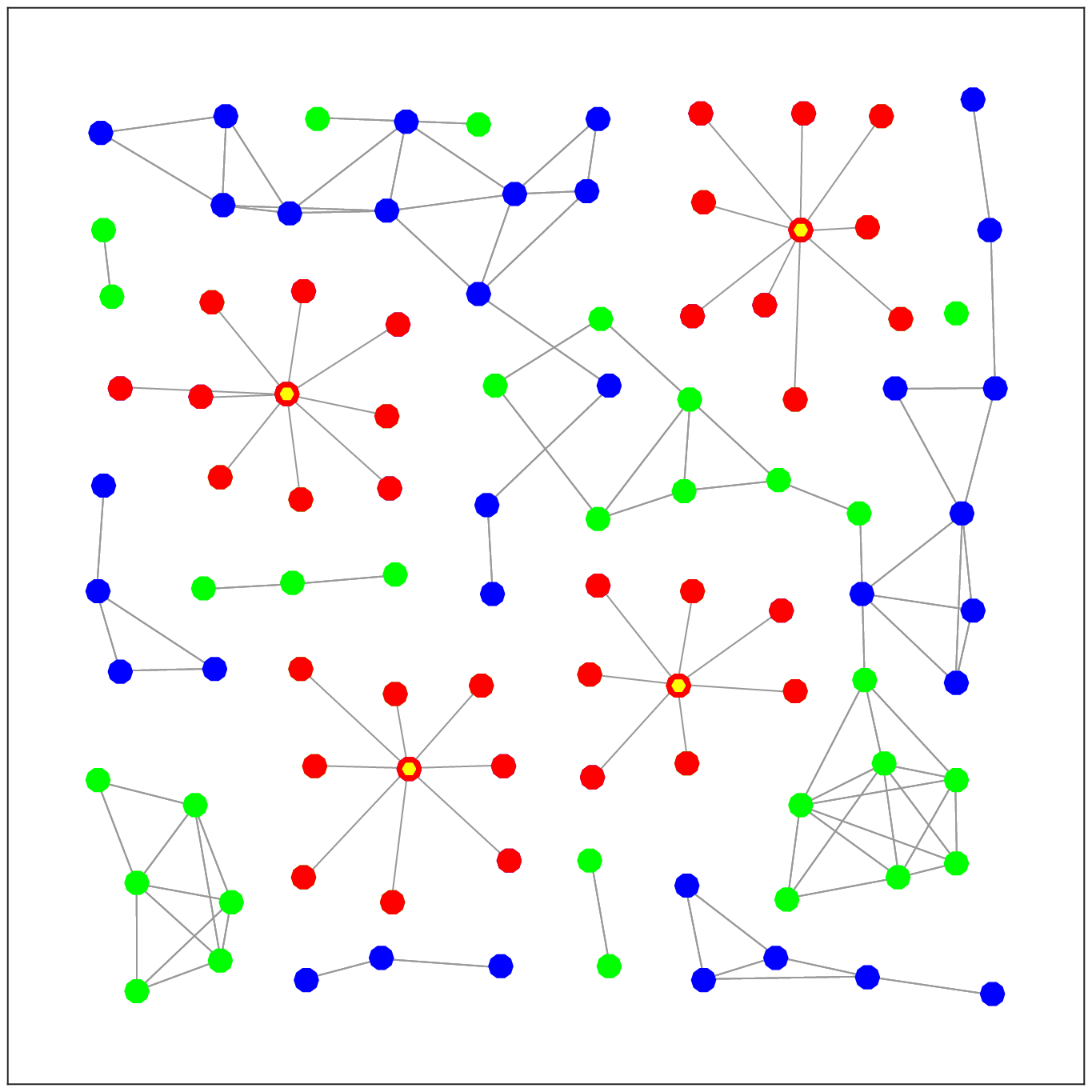}
\caption{Network topology at the end of the simulation running DLMSAW under attack}\label{fig:attacked after simulation}
\end{minipage}%
\begin{minipage}[t]{0.33\linewidth}
\centering
\includegraphics[width=0.9\textwidth, trim=1cm 1cm 1cm 1cm]{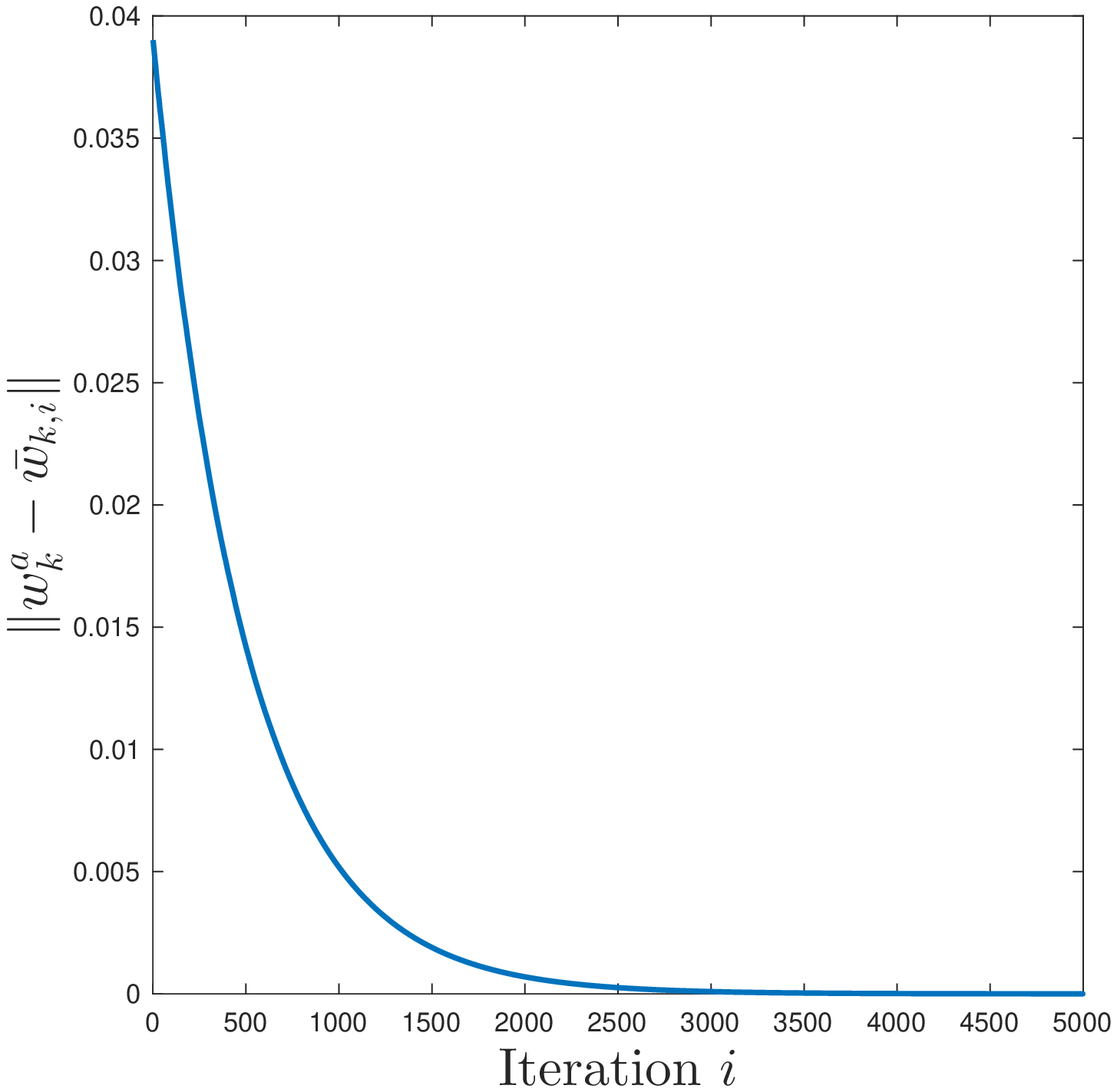}
\caption{Average state dynamics of compromised nodes' neighbors (stationary targets)}\label{fig: stationary_convergence_trends}
\end{minipage}%
\begin{minipage}[t]{0.33\linewidth}
\centering
\includegraphics[width=0.9\textwidth, trim=1cm 1cm 1cm 1cm]{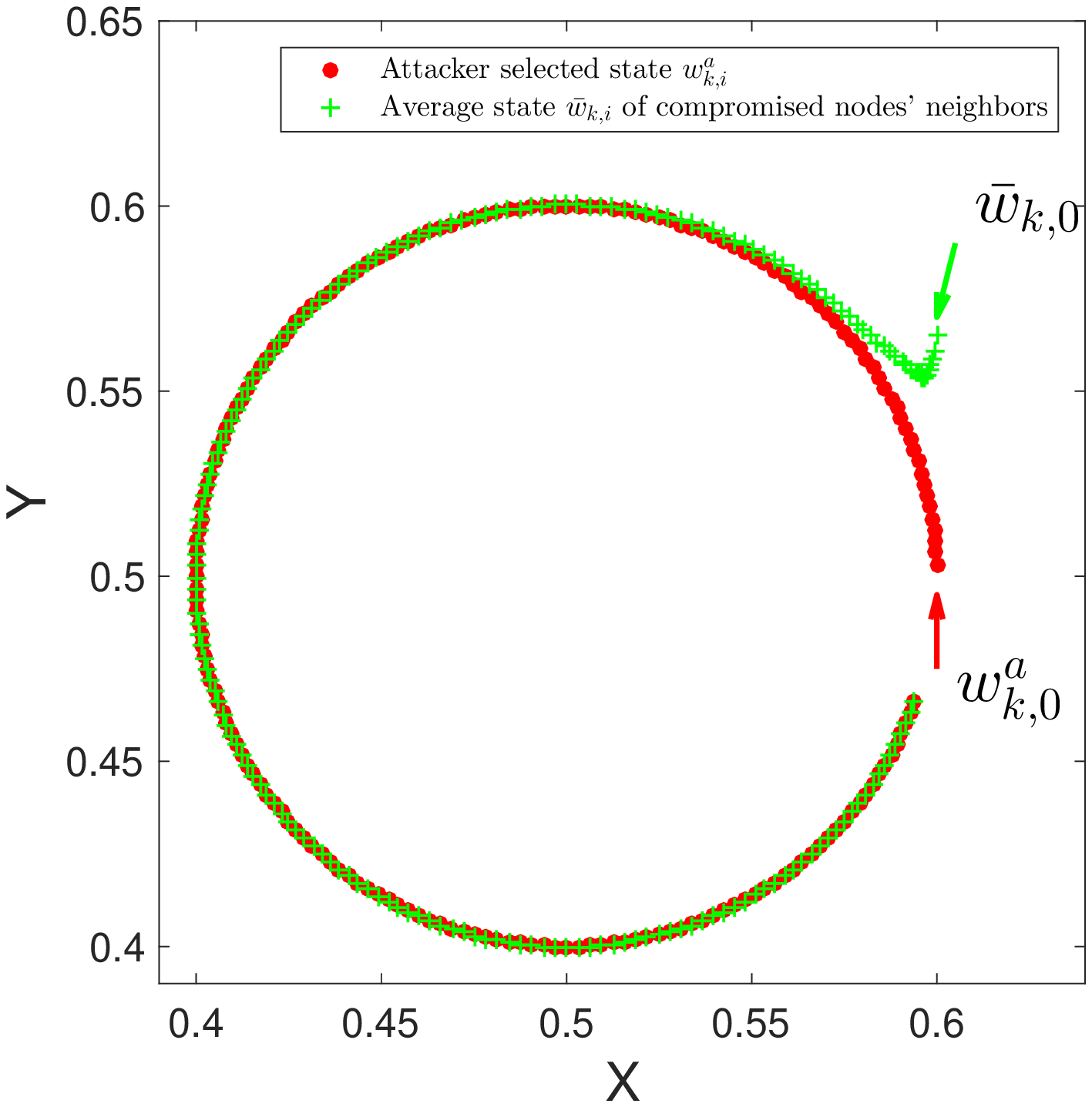}
\caption{Average state dynamics of compromised nodes' neighbors for the first 1900 iterations  (non-stationary targets)}\label{fig: non-stationary_convergence_trends}
\end{minipage}%
\end{figure*} 

\begin{figure*}[htbp]
\vspace{0cm} 
\setlength{\abovecaptionskip}{0.6cm}  
\begin{minipage}[t]{0.33\linewidth}
\centering
\includegraphics[width=0.9\textwidth, trim=1cm 1cm 1cm 1cm]{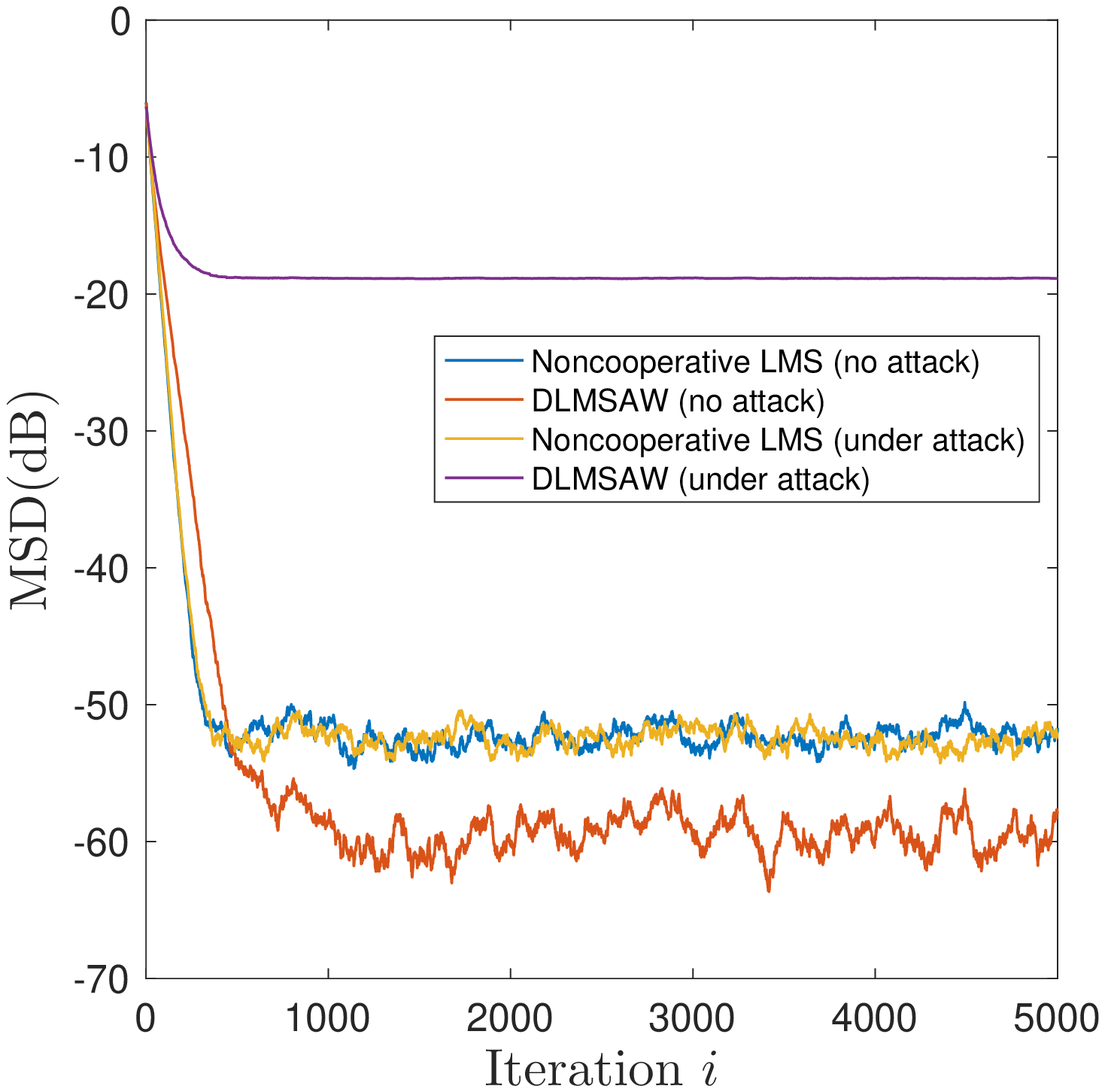}
\caption{MSD level for noncooperative LMS and DLMSAW (non-stationary targets)}\label{fig:non-stationary MSD level}
\end{minipage}%
\begin{minipage}[t]{0.33\linewidth}
\centering
\includegraphics[width=0.9\textwidth, trim=1.5cm 1.5cm 1.5cm 1.5cm]{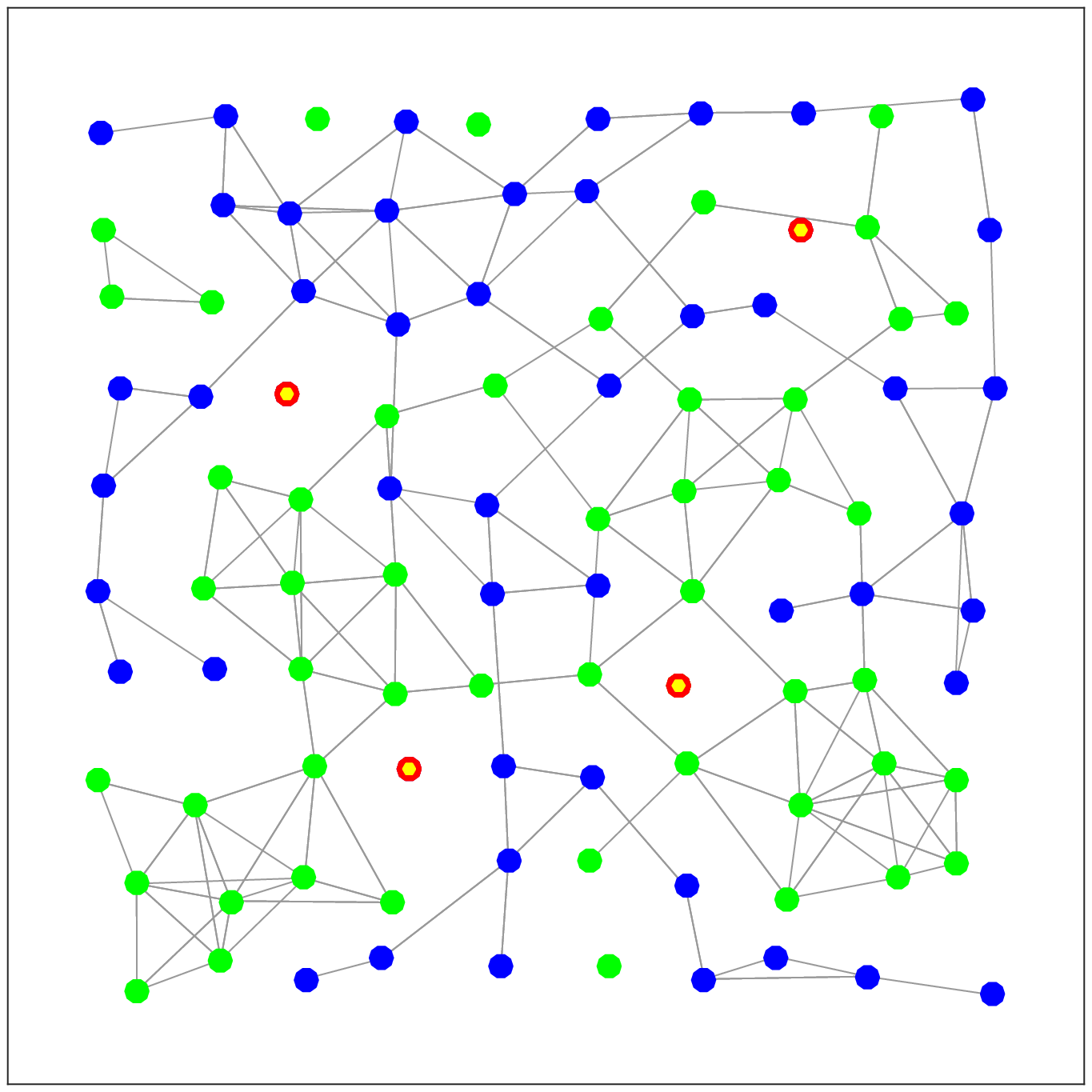}
\caption{Network topology at the end of the simulation (stationary, under attack, F-local resilient, $F = 1$)}\label{fig:network_Flocal}
\end{minipage}%
\begin{minipage}[t]{0.33\linewidth}
\centering
\includegraphics[width=0.9\textwidth, trim=1cm 1cm 1cm 1cm]{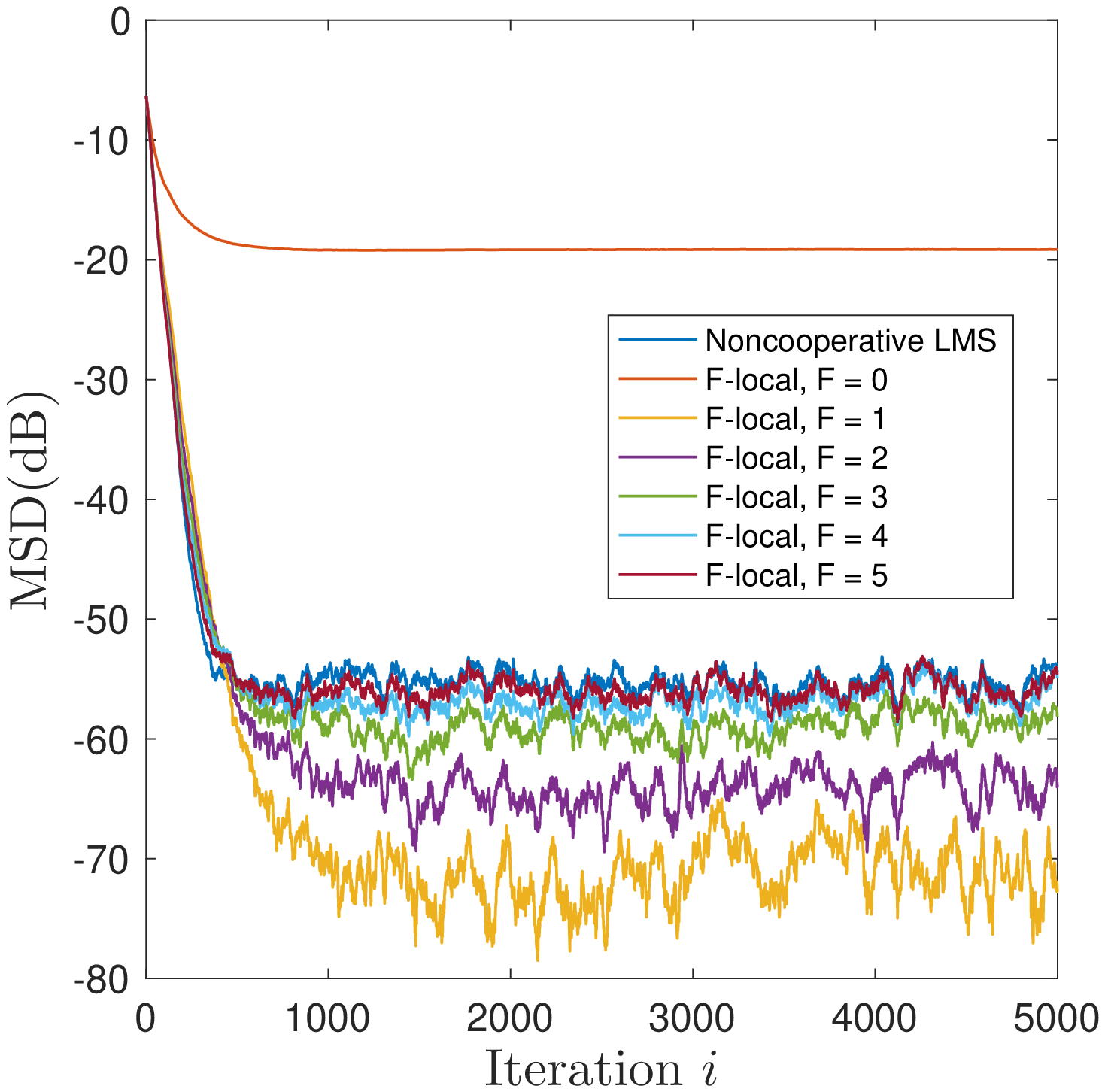}
\caption{MSD for noncooperative LMS and F-local resilient algorithm (stationary, under attack)}\label{fig: MSD_Flocal}
\end{minipage}%
\end{figure*}

\section{Related Work}
Many distributed algorithms are vulnerable to cyber attacks. The existence of an adversarial agent may prevent the algorithm from performing the desired task. 
Two main strategies to address distributed estimation/optimization problems are based either on consensus or on diffusion. 
Resilience of consensus-based distributed algorithms in the presence of cyber attacks
has received considerable attention. 
In particular, the approaches presented in \cite{DBLP:journals/tac/PasqualettiBB12, DBLP:journals/jsac/LeBlancZKS13, DBLP:conf/hicons/LeBlancH14} consider the consensus 
problem for scalar parameters in the presence of attackers, and resilience is achieved 
by leveraging high connectivity. Resilience has been studied also for triangular networks
for distributed robotic applications \cite{TriangularNetworks}.  The approach presented in
\cite{6900100} incorporates ``trusted nodes'' that cannot be attacked to improve the
resilience of distributed consensus.
Typical approaches usually assume Byzantine faults and consider that the goal of the attacker
is to disrupt the convergence (stability) of the distributed algorithm. In contrast, this 
work focuses on attacks that do not disrupt convergence but drive the normal agents to
converge to states selected by the attacker. 

Resilience of diffusion-based distributed algorithms has been studied in \cite{6232902} and \cite{journals/spm/SayedTCZT13}. The main idea is to consider the presence of intruders and use adaptive weights to counteract the attacks. This is an effective measure and has been applied to multi-task networks and distributed clustering problems \cite{6232902}. Several variants focusing on adaptive weights applied to multi-task networks can be found in \cite{7060710,  6845334, 7065284}. The approach presented in \cite{yuanchen/AdversaryDetection} proposes an Flag Raising Distributed Estimation algorithm where a normal agent raises an alarm if any of its neighbors' estimate deviates from its own estimate beyond a given threshold. This is similar to assigning adaptive weights to neighbors. 
Although adaptive weights provide some degree of resilience to attacks, 
we have shown in this work that adaptive weights may introduce vulnerabilities that allow
deception attacks. 

Finally, there has been considerable work on applications of diffusion algorithms 
that include spectrum sensing in cognitive networks \cite{7086338}, target localization \cite{targetLocalization}, distributed clustering \cite{6232902},  biologically inspired designs \cite{mobileAdaptiveNetworks}. Although our approach can be used for resilience of various 
applications, we focus on multi-target localization \cite{6854652}.

\section{Conclusions}
In this paper, we studied distributed diffusion for multi-task networks and investigated vulnerabilities  introduced by adaptive weights. We proposed attack models  
that can drive normal agents to  any state selected by the attacker, for both stationary  and non-stationary estimation. We then developed a resilient distributed diffusion algorithm for counteracting message falsification byzantine attack aiming at making normal agents converge 
to a selected state. Finally, we evaluate our results by  stationary and non-stationary 
multi-target localization.

\section{Acknowledgments}
This work is supported in part by the National Science Foundation (CNS-1238959), the Air Force Research Laboratory (FA 8750-14-2-0180), and by NIST (70NANB17H266). Any options, findings, and conclusions or recommendations expressed in this material are those of the author(s) and do not necessarily reflect the views of AFRL, NSF and NIST.

%
\bibliographystyle{unsrt}
\bibliography{sigproc}  
%
%

\newpage

\appendix
\centerline{{\textit{Proof of Lemma 1}}}
Assume $M$ is the normal neighbors set of $k$ not connected to $a$, and $N$ is the normal neighbors set of $k$ including $k$ itself connected to $a$.
At iteration $i$:
\begin{equation}\label{eq: lemma1 1}
    \bm{w}_{k,i} = \sum_{l \in \{M, N\}}a_{lk}(i)\bm{\psi}_{l,i} + a_{ak}(i)\bm{\psi}_{a,i}
\end{equation}
\noindent For $i > i_c$, the following equations hold for agents $l \in M$:
\begin{equation*}
        \bm{w}_{l,i-1} \approx w_l^0, \qquad e_{l}(i) = \bm{d}_{l}(i)-\bm{u}_{l,i}\bm{w}_{l,i-1} \approx 0
\end{equation*}
\begin{equation*}
        \bm{\psi}_{l,i} = \bm{w}_{l,i-1} + \mu_l \bm{u}_{l,i}^* e_{l}(i) \approx \bm{w}_{l,i-1} \approx w_l^0
\end{equation*}
Assuming the attack succeeds, and all $l \in N$ will be driven to converge to $w_l^a$, we have:
\begin{equation*}
        \bm{w}_{l,i-1} \approx w_l^a, \quad e_{l}(i) = \bm{d}_{l}(i)-\bm{u}_{l,i}\bm{w}_{l,i-1} \neq 0, \quad \bm{\psi}_{l,i} \neq \bm{w}_{l,i-1}  
\end{equation*}
As a result, for $i > i_c$, equation \eqref{eq: lemma1 1} can be written as:
\begin{equation*}
    \bm{w}_{k,i} = \sum_{l \in M}a_{lk}(i)w_l^0 + \sum_{l \in N}a_{lk}(i)(\bm{w}_{l,i} + \mu_{l} \bm{u}_{l,i}^* e_{l}(i))+ a_{ak}(i)\bm{\psi}_{a,i}
\end{equation*}
As observed by the above equation, for $i > i_c$, $\bm{w}_{k,i}$ is determined by multiple variables but the attacker can only manipulate the value of $\bm{\psi}_{a,i}$ and $a_{ak}(i)$, and thus indirectly manipulate $a_{lk}(i)$ for $l \in \{M, N\}$. Assume the attack succeeds and thus $\exists i_c$, s.t. $\forall i > i_c$, $\| \bm{w}_{k,i} - w^a_{k,i}\| < \epsilon$, for some small value $\epsilon > 0$. As a result, for $i > i_c$, node $a$ must make $a_{lk}(i) \rightarrow 0$ for $l \in \{M, N\}$. If not, $\bm{w}_{k,i}$ will be determined by some uncontrollable variables and cannot stay at the specific state selected by the attacker. Thus, for stationary state estimation, we finally get $\forall i > i_c$, $\forall \epsilon > 0$, $a_{ak}(i) > 1 - \epsilon$.

It's easy to verify that by manipulating $\psi_{a,k,i} = \bm{w}_{k,i-1}$ for  each $i$, $\forall \epsilon > 0$, $a_{ak}(i) > 1 - \epsilon$  holds at a certain point. Yet one could easily find the compromised node cannot achieve its goal of making node $k$ to converge to a selected state by such strategy.
The reason is when $k$ aggregates its neighbors' estimation at each iteration $i$, it actually updates its state $\bm{w}_{k,i}$ to the message it receives from $a$. Since this message is equal to $\bm{w}_{k,i-1}$, $\bm{w}_{k,i}$ does not change from $\bm{w}_{k,i-1}$. To conclude, once  $\forall \epsilon > 0$, $a_{ak}(i) > 1 - \epsilon$, node $k$ does not change its state.
Therefore, to make $k$'s state change, compromised node $a$ should follow a strategy ensuring $\neg(\forall \epsilon > 0, a_{ak}(i) > 1 - \epsilon)$. This condition should hold when the attacker wants node $k$ to change state. For stationary estimation, it applies to the iterations before convergence; and for non-stationary estimation, besides the iterations before convergence, it also applies to that after convergence since it adopts a dynamic model.

Moreover, recall the state update equation \eqref{eq: lemma1 1}, in order to dominate node $k$'s state dynamics, compromised node $a$ must be assigned a sufficient large weight so that to eliminate node $k$'s other neighbors impact on node $k$'s state updates. 
Based on the above facts, the compromised node should follow the following condition to make $k$'s state change: 
\begin{equation*}
    (\forall l \in \mathcal{N}_k \cap l \neq a, a_{ak}(i) \gg a_{lk}(i)) \wedge \neg(\forall \epsilon > 0, a_{ak}(i) > 1 - \epsilon)
\end{equation*}
However, it should be noted that it is tolerant that for some of the iteration towards convergence (or after convergence for non-stationary estimation), the above condition does not hold but attack will also succeed at future point. E.g., $a_{ak}(i) = 1$, at which iteration the state stays unchanged; Or, $a_{ak}(i) \ll 1$, at which iteration the state being assigned a random quantity (can be seen as re-initialization). To conclude, only when the above condition holds, node $k$ makes progress to converge to attacker's selected state. 
As a result, we loose the above condition as that given in \textit{Lemma 1}. 
Also, for stationary estimation, after convergence, $\forall \epsilon > 0$, $a_{ak}(i) > 1 - \epsilon$ should hold since once entering convergence, the state never changes.
\\[10pt]
\centerline{\textit{Proof of Lemma 2}}

We use $\delta_{a,k,i}$ to denote $\|\psi_{a,k,i} - \bm{w}_{k,i-1}\|$, and $\delta_{l,k,i}$ to denote $\|\bm{\psi}_{l,i} - \bm{w}_{k,i-1}\|$, for $l \in \mathcal{N}_k, l \neq a$. At iteration $(i_a + n)$,
\begin{equation*}
\begin{split}
\gamma^2_{ak}(i_a + n) =& (1 - \nu_k)^{n+1} \gamma^2_{ak}(i_a - 1) \\
&+ \nu_k [(1 - \nu_k)^{n} \delta_{a,k,i_a}^2 + (1 - \nu_k)^{n-1} \delta_{a,k,i_a+1}^2 \\
&+ \ldots + (1 - \nu_k) \delta_{a,k,i_a+n-1}^2 +\delta_{a,k,i_a+n}^2]
\end{split}
\end{equation*}
\begin{equation*}
\begin{split}
\gamma^2_{lk}(i_a + n) =& (1 - \nu_k)^{n+1} \gamma^2_{lk}(i_a - 1) \\
&+ \nu_k [(1 - \nu_k)^{n} \delta_{l,k,i_a}^2 + (1 - \nu_k)^{n-1} \delta_{l,k,i_a+1}^2 \\
&+ \ldots + (1 - \nu_k) \delta_{l,k,i_a+n-1}^2 +\delta_{l,k,i_a+n}^2]
\end{split}
\end{equation*}
For large enough $n$, $(1-\nu_k)^{n+1} \rightarrow 0$. Since we assume $\|\psi_{a,k,i} - \bm{w}_{k,i-1}\| \ll \| \bm{\psi}_{l,i} - \bm{w}_{k,i-1}\|$, i.e., $\delta_{a,k,i} \ll \delta_{l,k,i}$, for $i \geq i_a + n$, 
$\gamma^2_{ak}(i) \ll \gamma^2_{lk}(i)$ holds.
Based on equation \eqref{eq: adaptive relative-variance combination rule}, the weight $a_{ak}(i) \gg a_{lk}(i)$.  And since $\|\psi_{a,k,i} - \bm{w}_{k,i-1}\| = 0$ does not always hold, such that $\gamma^2_{ak}(i) = 0$ does not always hold, and as a result, $\forall \epsilon > 0$, $a_{ak}(i) > 1 - \epsilon$ does not always hold. 
And for stationary estimation, for $i > i_c$, $\psi_{a,k,i} = \bm{w}_{k, i}$ renders $\forall \epsilon > 0$, $a_{ak}(i) > 1 - \epsilon$. Thus, the condition in \textit{Lemma 1} can be satisfied by the condition in \textit{Lemma 2}.
\\[10pt]
\centerline{\textit{Proof of Lemma 3}}

Message received by $a$ from $k \in \mathcal{N}_a$ is $\bm{\psi}_{k,i}$. 
To compute $\bm{w}_{k,i-1}$ from $\bm{\psi}_{k,i}$, $k$ can perform the following computation:
\begin{equation*}
\bm{w}_{k,i-1} = \bm{\psi}_{k,i} - \mu_k \bm{u}_{k,i}^* (\bm{d}_k(i) - \bm{u}_{k,i} \bm{w}_{k,i-1})
\end{equation*}
from which it can compute $\bm{w}_{k,i-1}$ as:
\begin{equation*}
\bm{w}_{k,i-1} = \frac{\bm{\psi}_{k,i} - \mu_k \bm{u}_{k,i}^* \bm{d}_k(i)}{1 - \mu_k \bm{u}_{k,i}^* \bm{u}_{k,i}}
\end{equation*}
Assuming that the attacker has knowledge of $\mu_k$, $\bm{d}_k(i)$, and $\bm{u}_{k,i}$, 
the value $\bm{w}_{k,i-1}$ can be computed exactly.
\\[10pt]
\centerline{\textit{Proof of Proposition 1}}

The constraint of $r_{k,i}^a$ is consistent with the condition of \textit{Lemma 2}.
Thus, for $i \geq i_a'$, the state of node $k$ will be attacked as to be:
\begin{equation}\label{eq:attack w}
\begin{split}
\bm{w}_{k,i} &\approx \psi_{a,k,i} = \bm{w}_{k,i-1} + r_{k,i}^a(x_{i} - \bm{w}_{k,i-1}) \\
& =r_{k,i}^a x_{i} + (1 - r_{k,i}^a) \bm{w}_{k,i-1} \\
&(i \geq i_a + n, \text{ subject to } (1-\nu_k)^{n+1} \approx 0)
\end{split}
\end{equation}

let $X_{i}$ be $\bm{w}_{k,i}$, $X_{i-1}$ be $\bm{w}_{k,i-1}$, $A_{i}$ be $r_{k,i}^a x_{i}$, and $B$ be $(1-r_{k,i}^a)$.
Equation \eqref{eq:attack w} turns to:
\begin{equation} \label{convergent proof 1}
X_{i} \approx A_{i} + B X_{i-1}
\end{equation}
Assume $\lim_{i \rightarrow \infty} X_{i-1} = X_{i-1}^0$ and $\lim_{i \rightarrow \infty} X_{i} = X_{i}^0$, then for $i \rightarrow \infty$ we get:
\begin{equation} \label{convergent proof 2}
X_{i}^0 \approx A_{i} + B X_{i-1}^0
\end{equation}
Subtract \eqref{convergent proof 2} from \eqref{convergent proof 1}, we get
\begin{equation*}
X_{i} - X_{i}^0  \approx B (X_{i-1} - X_{i-1}^0)
\end{equation*}
let $\varepsilon_i = X_{i} - X_i^0$, for $i = 0, 1, 2, \ldots$, then $\varepsilon_{i} \approx B \varepsilon_{i-1} \approx B^2 \varepsilon_{i-2} \approx \ldots \approx B^{i} \varepsilon_0$. 
The sufficient and necessary requirement of convergence is
\begin{equation*}
\lim_{i \to \infty} \varepsilon_{i} = 0
\end{equation*}
Or, $\lim_{i \to \infty} B^{i} \varepsilon_{0} = 0$. That is, $\lim_{i \to \infty} B^{i} = 0$. Therefore, we get the sufficient and necessary requirement of convergence is $|B| < 1$.
since $B = 1 - r_{k}^a$, and $r_{k}^a \in (0, 1)$, we get $B \in (0, 1)$. Therefore, $\lim_{i \rightarrow \infty} (X_i - X_i^0) = 0$. The assumption $\lim_{i \rightarrow \infty} X_i = X_i^0$ holds.
Therefore, $X_i$ is convergent to $X_i^0$.

To get the value of $X_i^0$, we need to analyze the following two scenarios: stationary state estimation and non-stationary state estimation, separately.
\subsubsection{Stationary state estimation}
In stationary scenarios, the convergence state is in-dependent of time, i.e., $X_{i}^0 = X_{i-1}^0 = X^0$. Therefore, equation \eqref{convergent proof 2} turns to:
\begin{equation*}
X^0 \approx A_i + B X^0
\end{equation*}
Thus,  $(1-B) X^0 \approx A_i$, $X^0 \approx \frac{A_i}{1-B}$. The convergent point is:
\begin{equation*}
w_{k,i} \approx \frac{r_{k,i}^a x_{i+1}}{1-(1-r_{k,i}^a)} = \frac{r_{k,i}^a w_k^a}{1-(1-r_{k,i}^a)} = w_k^a = w_{k,i}^a, \quad i \rightarrow \infty
\end{equation*}
which realizes the attacker's objective \eqref{eq: objective function}.

\subsubsection{Non-stationary state estimation}
In non-stationary scenarios, we first assume $x_i = w_{k}^a + \theta_{k,i-1}^a$ and later we will show how $\theta_{k,i-1}^a$ turns to  $\theta_{k,i-1}^a + \frac{\Delta \theta_{k,i-1}^a}{r_{k,i}^a}$. 

Assume the convergence point $X_i^0$ is a combination of a time-independent value and a time-dependent value, such that $X_i^0 = X^0 + \rho_i$. Take original values into \eqref{convergent proof 2} and we get:
\begin{equation}\label{combination of w/wo}
X^0 + \rho_{i} \approx r_{k,i}^a (w_k^a + \theta_{k,i-1}^a) + (1-r_{k,i}^a)(X_0 + \rho_{i-1})
\end{equation}
Divided \eqref{combination of w/wo} into the time-independent component and time-dependent component. We get:
\begin{equation*}
X^0 \approx w_k^a, \quad 
\rho_{i} - \rho_{i-1} \approx r_{k,i}^a (\theta_{k,i-1}^a - \rho_{i-1})
\end{equation*}
Let $\Delta \rho_{i-1} = \rho_{i} - \rho_{i-1}$, we get:
\begin{equation}\label{eq:rho}
\rho_{i-1} \approx \theta_{k,i-1}^a - \frac{\Delta \rho_{i-1}}{r_{k,i}^a}
\quad \text{ and } \quad
\rho_{i} \approx \theta_{k,i}^a - \frac{\Delta \rho_{i}}{r_{k,i}^a}
\end{equation}
Thus,
\begin{equation*}
\Delta \rho_{i-1} = \rho_{i} - \rho_{i-1} \approx \theta_{k,i}^a - \theta_{k,i-1}^a - \frac{1}{r_{k,i}^a} (\Delta \rho_{i} - \Delta \rho_{i-1}) 
\end{equation*}
Let $\Delta \theta_{k,i-1}^a = \theta_{k,i}^a - \theta_{k,i-1}^a$ and $\Delta^2 \rho_{i-1} = \Delta \rho_{i} - \Delta \rho_{i-1}$, then 
\begin{equation*}
\Delta \rho_{i-1} \approx \Delta \theta_{k,i-1}^a - \frac{\Delta^2 \rho_{i-1}}{r_{k,i}^a}
\quad \text{ or } \quad 
\Delta \rho_{i} \approx \Delta \theta_{k,i}^a - \frac{\Delta^2 \rho_{i}}{r_{k,i}^a}
\end{equation*}
If we assume $\frac{\Delta^2 \rho_{i}}{r_{k,i}^a} \ll \Delta \theta_{k,i}^a$, then we have $\Delta \rho_{i} \approx \Delta \theta_{k,i}^a$. 
Therefore, \eqref{eq:rho} turns to:
\begin{equation*}
\rho_{i} \approx \theta_{k,i}^a - \frac{\Delta \theta_{k,i}^a}{r_{k,i}^a}
\end{equation*}
Thus, the dynamic convergence point for $k$ is:
\begin{equation*}
w_{k,i} \approx w_k^a + \theta_{k,i}^a - \frac{\Delta \theta_{k,i}^a}{r_{k,i}^a}, \qquad i \rightarrow \infty
\end{equation*}
This means when sending ${\psi}_{a,k,i} = \bm{w}_{k,i-1} + r_{k,i}^a (w_k^a + \theta_{k,i-1}^a - \bm{w}_{k,i-1})$ as the communication message, the compromised node $a$ can make $k$ converge to $w_k^a + \theta_{k,i}^a - \frac{\Delta \theta_{k,i}^a}{r_{k,i}^a}$. To make agent $k$ converge to a desired state $w_k^a + \Omega_{k,i}^a$, we assume the message being sent is:
\begin{equation*}
{\psi}_{a,k,i} = \bm{w}_{k,i-1} + r_{k,i}^a (w_k^a + m_{i-1} - \bm{w}_{k,i-1})
\end{equation*}
And the corresponding convergence point will be 
$ w_k^a + m_{i} - \frac{\Delta m_{i}}{r_{k,i}^a}$. We want the following equation holds:
\begin{equation}\label{solve convergence point}
w_k^a + m_{i} - \frac{\Delta m_{i}}{r_{k,i}^a} = w_k^a + \Omega_{k,i}^a
\end{equation}
Assuming $\Delta^2 m_i \rightarrow 0$, the solution of \eqref{solve convergence point} is: $m_i = \Omega_{k,i}^a + \frac{\Delta \Omega_{k,i}^a}{r_{k,i}^a}$, meaning to make $k$ converge to a desired state $w_k^a + \Omega_{k,i}^a$, the compromised node $a$ should send communication message:
\begin{equation*}
{\psi}_{a,k,i} = \bm{w}_{k,i-1} + r_{k,i}^a (w_k^a + \Omega_{k,i-1}^a + \frac{\Delta \Omega_{k,i-1}^a}{r_{k,i}^a} - \bm{w}_{k,i-1})
\end{equation*}

Thus, to make $k$ converge to $w_k^a + \theta_{k,i}^a$, the compromised node $a$ should send communication message:
\begin{equation*}
{\psi}_{a,k,i} = \bm{w}_{k,i-1} + r_{k,i}^a (w_k^a + \theta_{k,i-1}^a + \frac{\Delta \theta_{k,i-1}^a}{r_{k,i}^a} - \bm{w}_{k,i-1})
\end{equation*}
The convergence point is:
\begin{equation*}
w_{k,i} = w_k^a + \theta_{k,i}^a = w_{k,i}^a, \qquad i \rightarrow \infty
\end{equation*}
which realizes the attacker's objective \eqref{eq: objective function}.

We can verify the convergence point by putting $x_i = w_k^a + \theta_{k,i-1}^a + \frac{\Delta \theta_{k,i-1}^a}{r_{k,i}^a}, \bm{w}_{k,i} = w_k^a + \theta_{k,i}^a, \bm{w}_{k,i-1} = w_k^a + \theta_{k,i-1}^a$ back into equation \eqref{eq:attack w}, we get:
\begin{equation*}
\begin{split}
w_k^a + \theta_{k,i}^a &\approx r_{k,i}^a (w_k^a + \theta_{k,i-1}^a + \frac{\Delta \theta_{k,i-1}^a}{r_{k,i}^a}) + (1 - r_{k,i}^a)  (w_k^a + \theta_{k,i-1}^a)\\
\theta_{k,i}^a &\approx r_{k,i}^a (\theta_{k,i-1}^a + \frac{\Delta \theta_{k,i-1}^a}{r_{k,i}^a}) + (1 - r_{k,i}^a)  \theta_{k,i-1}^a\\
\theta_{k,i}^a &\approx \theta_{k,i-1}^a + \Delta \theta_{k,i-1}^a 
\end{split}
\end{equation*}
The resulting equation holds, illustrating the validity of the convergence state.

\end{document}